\documentclass[14pt]{article}
\usepackage{amsmath}
\usepackage{amsthm}
\usepackage{titlesec}
\usepackage{indentfirst}
\usepackage{amssymb}
\usepackage{titlesec}
\usepackage[thinlines,thiklines]{easybmat}

\usepackage{graphicx}
\usepackage{epsfig}
\theoremstyle{definition}\newtheorem{Df}{Definition}
\theoremstyle{plain}\newtheorem{Th}{Theorem}
\theoremstyle{definition}\newtheorem{Rm}{Remark}
\theoremstyle{definition}
\theoremstyle{plain}
\theoremstyle{plain}
\theoremstyle{plain}\newtheorem{Lm}[Th]{Lemma}

\textwidth 155mm
\textheight 225mm \footskip 8mm
\parskip 2mm
\oddsidemargin 0pt \evensidemargin 0pt \hoffset 0.3in \topskip 0pt
\voffset -16mm

\begin{document}
\title{On the state complexity of semi-quantum finite automata}

\author{Shenggen Zheng$^{1,}$\thanks{Corresponding author.
{\it  E-mail
addresss:} zhengshenggen@gmail.com (S. Zheng), gruska@fi.muni.cz (J. Gruska),  issqdw@mail.sysu.edu.cn (D. Qiu).},
\hskip 2mm Jozef Gruska$^{1}$,
\hskip 2mm Daowen Qiu$^{2}$
 \\
\small{{\it $^{1}$ Faculty of Informatics, Masaryk University, Brno 60200, Czech Republic }}\\
\small{{\it $^{2}$ Department of
Computer Science, Sun Yat-sen University, Guangzhou 510006,
  China }}\\
}

\date{ }
\maketitle \vskip 2mm \noindent
{\bf Abstract}\par
Some of the most interesting and important results concerning quantum
finite automata are those showing that they can recognize certain
languages with (much) less resources than corresponding classical finite
automata.  This paper shows three results of such a type that are
stronger in some sense than other ones because (a) they deal with
models of quantum finite automata with very little quantumness (so-called semi-quantum
one- and two-way finite automata);
(b) differences, even comparing with probabilistic classical automata,
are bigger than expected; (c) a trade-off between the number of classical and quantum basis states needed is demonstrated in one case and  (d) languages (or the promise problem) used to show main results are very simple and often explored ones in automata theory or in communication complexity, with seemingly little structure that could be utilized.

\par
\vskip 2mm {\sl Keywords:}  Quantum computing,  semi-quantum finite automata, state complexity.
\vskip 2mm

\section{Introduction}
An important way to get deeper insights into the power of various quantum resources and operations is to explore the power of various quantum variations of the basic models of classical automata. Of a special interest is to do that for various quantum variations of the classical finite automata, especially for those that use limited amounts of quantum  resources: states, correlations, operations and measurements. This paper aims to contribute to such a line of research.

There are several approaches how to introduce quantum features to
classical models of finite automata. Two of them will be dealt with in this paper. The first one is to consider quantum
variants of the classical {\em one-way (deterministic) finite automata}
(1FA or 1DFA) and the second one is to consider quantum variants of the
classical {\em two-way finite automata} (2FA or 2DFA). Already the very first
attempts to introduce such models, by Moore and Crutchfields \cite{Moo97} as well as
Kondacs and Watrous \cite{Kon97} demonstrated that in spite of the fact that in the
classical case, 1FA and 2FA have the same recognition power, this is not so for their quantum
variations (in case only unitary operations and projective measurements are considered as quantum operations).
Moreover, already the first model of {\em two-way
quantum finite automata} (2QFA), namely that introduced by Kondacs and Watrous,
demonstrated that quantum variants of 2FA are much too
powerful~--~they can recognize even some {\em non-context free languages} and
are actually not really finite  in a strong sense \cite{Kon97}. Therefore it started to be
 of interest to introduce and explore some ``less quantum"
variations of 2FA and their power \cite{Amb02,Amb98,Bro99,Mat12,Yak11}.

A ``hybrid"~--~quantum/clssical~--~variations of 2FA, namely, {\em two-way
finite automata with quantum and classical states}  (2QCFA), were
introduced by Ambainis and Watrous \cite{Amb02}. For this model they showed, in an elegant way, that already an addition of a single qubit to the
classical model can much increase its power. A 2QCFA is
essentially a classical 2FA augmented with a quantum memory of constant
size (for states of a fixed Hilbert space) that does not depend on the
size of the (classical) input. In spite of such a restriction, 2QCFA have
been shown to be even more powerful than {\em two-way probabilistic finite automata}
(2PFA) \cite{Amb02,Zhg12,ZhgQiu11}. A one-way version of 2QCFA was studied in \cite{ZhgQiu112}, namely {\em one-way finite automata with quantum and classical states} (1QCFA).

Number of states is a natural complexity measure for finite automata. In case of quantum finite automata by that we understand the number of the basis states of the quantum  space~--~that is its dimension. In case of hybrid, that is quantum/classical, finite automata, it is natural to consider both complexity measures~--~number of classical and also number of quantum (basis) states~--~and, potentially, trade-offs between them.

 State complexity is one of the important research fields of computer science and it  has many applications \cite{Yu95}, e.g., in natural language and speech processing,
 image generation and encoding, etc.
Early in 1959, Rabin and Scott \cite{Rab59} proved that any $n$-state {\em one-way nondeterministic finite automaton} (1NFA) can be simulated by a $2^n$-state {\em one-way deterministic finite automaton} (1DFA).  Salomaa \cite{Sal64} began to explore state complexity of finite automata in 1960s. The number of states of finite automata used in applications were usually small at that time and therefore  investigations of state complexity of finite automata was seen mainly as a purely theoretical problem. However,  the numbers of states of finite automata in applications can be huge nowadays, even millions of states in some cases \cite{Kir01}.    It becomes therefore also practically  important to explore state complexity of finite automata.
State complexity of several variants of finite automata, both one-way and two-way, were deeply and broadly studied in the past thirty years \cite{AmbNay02,Amb96,Amb02,Amb98,Amb09,AmYa11,Ber05,BMP06,Bir93,Chr86,DwS90,Fre82,Fre08,Fre09,GQZ13,Le06,Liu08,Mer00,Mer01,Mer02,Mil01,Yu94,Yu95,Yak11,Yak10,ZhgQiu112,Zhg12}.

In this paper we  explore the state complexity of semi-quantum finite automata and their space-efficiency comparing to the corresponding classical  counterparts. We do that by showing that even for several very simple, and often considered, languages or promise problems, a little of quantumness can much decrease the state complexity of the corresponding semi-quantum finite automata. The first of these problems will be one of the very basic problem that is explored in communication complexity. Namely, the strings equality problem.

In this paper we  explore the state complexity of semi-quantum finite automata and their space-efficiency comparing to the corresponding classical  counterparts. We do that by showing that even for several very simple, and often considered, languages or promise problems, a little of quantumness can much decrease the state complexity of the corresponding semi-quantum finite automata. The first of these problems will be one of the very basic problem that is explored in communication complexity. Namely, the promise version of  strings equality problem \cite{Buh98,Buh10}.

We use a  promise problem to model the promise version of strings equality problem.
  For the alphabet $\Sigma=\{0,1,\#\}$ and $n\in {\mathbb{Z}}^+$, let us consider the promise problem $A_{EQ}(n)=(A_{yes}(n), A_{no}(n))$, where $A_{yes}(n)=\{x\#y\,|\,x=y,x,y\in\{0,1\}^n\}$ and $A_{no}(n)=\{x\#y\,|\,x\neq y,x,y\in\{0,1\}^n, H(x,y)= \frac{n}{2}\}$. ($H(x,y)$ is the Hamming distance between $x$ and $y$, which is the number of bit positions on which they
differ.)

Klauck \cite{Kla00} has proved that, for any language,  the state complexity of exact  quantum/classical finite automata, which is a general model of one-way  quantum finite automata,   is not less than the state complexity of 1DFA.  Therefore, it is interesting and important to find out whether the result still holds for interesting  cases of promise problems or not\footnote{Ambainis and Yakaryilmaz showed in  \cite{AmYa11} that there is a very special case in which the superiority of quantum
computation to classical one cannot be bounded. }.  Applying the communication complexity result from \cite{Buh98,Buh10} to  finite automata,  for any $n\in {\mathbb{Z}}^+$, we prove  that  promise problem $A_{EQ}(n)$  can be solved  by an exact 1QCFA with $n$ quantum basis states and ${\bf O}({n})$ classical states, whereas the sizes of the corresponding 1DFA are $2^{{\bf \Omega}(n)}$.

As the next we will consider state complexity of the language $L(p)=\{a^{kp}\,|\,k\in {\mathbb{Z}}^+\}$.
It is well know that,  for any $p\in {\mathbb{Z}}^+$, each 1DFA and 1NFA accepting $L(p)$ has at least $p$ states.
Ambainis and Freivalds \cite{Amb98}, proved, using a non-constructive method,  that $L(p)$ can be recognized by a one-way measure-once quantum finite automaton (MO-1QFA) with one-sided error $\varepsilon$  with $poly\left(\frac{1}{\varepsilon} \right)\cdot\log p$ basis states (where $poly(x)$ is some polynomial in $x$). This bound  was improved to ${\bf O}(\frac{\log p}{\varepsilon^3})$ in \cite{BMP06} and  to $4\frac{\log 2p}{\varepsilon}$ in \cite{Amb09}.   That is the best result  known for such a mode of acceptance and it is an interesting open problem whether this bound can be much improved.
If $p$ is a prime, $L(p)$ can not be recognized by any one-way probabilistic finite automaton (1PFA) with less than $p$ states \cite{Amb98}. For the case that $p$ is not a prime, Mereghetti el at. \cite{Mer01} showed that the number of states of a 1PFA necessary and sufficient
for accepting the  language $L(p)$ with isolated cut point  is $p_1^{\alpha_1}+p_2^{\alpha_2}+\cdots+ p_s^{\alpha_s}$, where $p_1^{\alpha_1}p_2^{\alpha_2}\cdots p_s^{\alpha_s}$ is the prime factorization of $p$.  Mereghetti el at. \cite{Mer01} also proved that $L(p)$  can be recognized by a 2 basis states  MO-1QFA with isolated cut point.  However, this mode of acceptance  often leads to  quite different  state complexity outcome than  one-sided error and error probability  acceptance modes.

Concerning  two-way finite automata, for any prime $p$, $p$ states are necessary and sufficient for accepting $L(p)$ on {\em two-way deterministic finite automata} (2DFA) and  {\em two-way nondeterministic finite automata} (2NFA) \cite{Mer00}. For the case that $p$ is  not prime,   the number of states necessary and sufficient for accepting $L(p)$ on 2DFA and 2NFA is $p_1^{\alpha_1}+p_2^{\alpha_2}+\cdots+ p_s^{\alpha_s}$  \cite{Mer00}, where $p_1^{\alpha_1}p_2^{\alpha_2}\cdots p_s^{\alpha_s}$ is the prime factorization of $p$.  Yakaryilmaz and Cem Say \cite{Yak10} showed that there exists a $7$-state {\em one-way finite automaton with restart} (1QFA{$^{\circlearrowleft}$) which accepts $L(p)$ with one-sided error $\varepsilon$ and  expected running time  {\bf O}$(\frac{1}{\varepsilon}\sin^{-2}(\frac{\pi}{p})|w|)$, where $|w|$  is the length of  input $w$. For any $n$-state 1QFA{$^{\circlearrowleft}$ ${\cal M}_1$ with expected running time $t(|w|)$, Yakaryilmaz and Cem Say \cite{Yak10} also proved that there exists a 2QCFA ${\cal M}_2$ with $n$ quantum basis states, ${\bf O}(n)$ classical states, and with expected runtime ${\bf O}(t(|w|))$, such that ${\cal M}_2$
accepts every input string $w$ with the same probability as ${\cal M}_1$ does. Therefore, $L(p)$ can be recognized with one-sided error $\varepsilon$ by a 2QCFA  with $7$ quantum basis states and a constant number of classical states.

In this paper we  prove   that the language $L(p)$ can be recognized  with one-sided error $\varepsilon$ in a linear expected running time ${\bf O}(\frac{1}{ \varepsilon}p^2|w|)$ by a  2QCFA ${\cal A}(p,\varepsilon)$ with 2 quantum basis states and a constant number of classical states. We also show that the number of  states needed for accepting $L(p)$ on a polynomial time 2PFA is at least $\sqrt[3]{(\log p)/b}$, where $b$ is a constant.

The problem of checking
whether the length of  input string is equal to a given constant $m\in {\mathbb{Z}}^+$, is extensively  studied in literatures as well. For any $m\in {\mathbb{Z}}^+$ and any finite alphabet $\Sigma$, it is obvious that the number of states of a 1DFA for accepting the language $C(m)=\{w\,|\,w\in\Sigma^m\}$  is at least  $m$. Freivalds \cite{Fre82} showed that there is an $\varepsilon$ error probability 1PFA accepting $C(m)$ with  ${\bf O}({\log^2 m})$ states.  Ambainis and Freivalds \cite{Amb98} proved that $C(m)$ can be recognized by an MO-1QFA with ${\bf O}(\log m)$ quantum basis states.  Yakaryilmaz and Cem Say \cite{Yak10} showed that there exists a $7$-state 1QFA{$^{\circlearrowleft}$ ${\cal M}$ which accepts $C(m)$ with one-sided error $\varepsilon$ and  expected running time  {\bf O}$(\frac{1}{\varepsilon}2^m|w|)$ which is an exponential of  $m$. The 1QFA{$^{\circlearrowleft}$ ${\cal M}$ can only work efficiently on a very small $m$.

 In this paper we prove that the language $C(m)$ can be recognized with one-sided error  $\varepsilon$ in  expected running time ${\bf O}(\frac{ 1}{\varepsilon}m^2|w|^4)$ by a 2QCFA ${\cal A}(m, \varepsilon)$ with 2 quantum basis states and a constant number of classical states. The expected running time is a polynomial of $m$ and $|w|$. We show also that  the number of  states needed for accepting $C(m)$ on a polynomial 2PFA is at least $\sqrt[3]{(\log m)/b}$, where $b$ is a constant.

 Since 1QCFA and 2QCFA have both quantum and classical states, it is interesting to ask when  there is  some trade-off between these two kinds of states. We prove  such a trade-off property for the case a 1QCFA accepts the  language $L(p)$.  Namely, it holds that for any integer $p$ with prime factorization $p=p_1^{\alpha_1}p_2^{\alpha_2}\cdots p_s^{\alpha_s}$ ($s>1$), for any partition  $I_1, I_2$ of $\{1,\ldots, s\}$,  and for $q_1=\prod_{i\in I_1}p_i^{\alpha_i}$ and $q_2=\prod_{i\in I_2}p_i^{\alpha_i}$, the language $L(p)$  can be recognized with a  one-sided error $\varepsilon$ by a  1QCFA $A(q_1,q_2,\varepsilon)$ with ${\bf O}(\log{q_1})={\bf O}(\sum_{i\in I_1}{\alpha_i}\log p_i)$ quantum basis states and ${\bf O}(q_2)={\bf O}(\prod_{i\in I_2}p_i^{\alpha_i})$ classical states.

The paper is structured as follows. In Section 2 some basic concepts and notations are introduced and automata models involved are
described in some details. State complexities for the string equality problems will be discussed in Section 3. State succinctness for two families of regular languages is explored in Section~4. A trade-off property for 1QCFA is demonstrated in Section 5. Finally, Section 6 contains a conclusion and suggestions for further research.

\section{Preliminaries}

  We introduce in this section some  basic concepts and also notations concerning  quantum  information processing  and afterwards also  the models of 1QCFA and
2QCFA. Concerning more on quantum  information processing we refer the reader to \cite{Gru99,Nie00},
and concerning more on classical and quantum automata \cite{Gru99,Gru00,Hop79,Paz71,Qiu12}.

\subsection{Preliminaries of quantum information processing}

According to  quantum mechanical principles, to each closed quantum system ${\cal S}$, a Hilbert space ${\cal H_S}$ is associated and states of ${\cal S}$ correspond to  vectors of the norm one of ${\cal H_S}$. In case ${\cal H_S}$ is an $n$-dimensional vector space then it has  a basis (actually infinite many of them) consisting of $n$ mutually orthogonal  vectors.
We will mostly denote such a basis and its vectors  by
$$\{|i\rangle\}^{n}_{i=1}.$$
In such a case any vector of ${\cal H_S}$ can be uniquely expressed as a superposition
\begin{equation}
|\psi\rangle=\sum_{i=1}^n\alpha_i|i\rangle,
\end{equation}
where $\alpha_i$'s are complex numbers, called probability amplitudes, that satisfy the following, so-called normalization, condition $\sum_{i=1}^n|\alpha_i|^2=1$. If the state $|\psi\rangle$ is measured with respect to the above basis, then the state collapses to one of the states $|i\rangle$ and to a particular state $|i_0\rangle$ with the probability $|\alpha_{i_0}|^2$.  $i_0$ is then the outcome (discrete) received into the classical world.

Each evolution step of a finite $n$ dimensional quantum system is specified by a
unitary $n\times n$ matrix $U$ and
changes any current state  $|\phi\rangle$ into the state $U|\phi\rangle$.

To extract some information from a quantum state $|\psi\rangle$ a measurement has to be performed.  We will  consider here  mostly measurements
defined by a set $\{P_m\}$ of so-called projective operators/matrices, where indices $m$ refer to the potential classical outcomes of measurements, with the property:
\begin{equation}
P_iP_j=\left\{\begin{array}{ll}
                    P_i\ \ &
i=j,\\
                    0&i\neq j,
                  \end{array}
 \right.
\end{equation}
that, in addition, satisfies the following completeness condition
\begin{equation}\sum_{i=1}^n P_i=I.\end{equation}
In case a state $|\psi\rangle$ is measured with respect to the set of projective operators $\{P_m\}$, then the classical outcome $m$ is obtained with the probability
\begin{equation}p(m)=\|P_m|\psi\rangle\|^2,\end{equation}
and then the state $|\psi\rangle$ ``collapses" into the state
 \begin{equation}\frac{P_m|\psi\rangle}{\sqrt{p(m)}}.\end{equation}

A projective measurement $\{P_m\}$ is usually specified by an {\em observable} $M$, a Hermitian matrix
that has a so called spectral decomposition
\begin{equation}M=\sum_m mP_m,\end{equation}
where $m$ are mutually different eigenvalues of $M$ and each $P_m$ is a projector into the space of eigenvectors associated to the eigenvalue $m$.

\subsection{Preliminaries on semi-quantum finite automata}
2QCFA were introduced by Ambainis and Watrous \cite{Amb02} and explored also by Yakaryilmaz,  Qiu, Zheng and others \cite{Qiu08,Yak10,ZhgQiu112,Zhg12,ZhgQiu11,Zhg13a}. Informally, a 2QCFA can be seen as a 2DFA with an access to a quantum memory for states of a fixed Hilbert space upon which at each step either a unitary operation is performed or a projective measurement and the outcomes of which then probabilistically determine the next move of the underlying 2DFA.

 \begin{Df}
A 2QCFA ${\cal A}$ is specified by a 9-tuple
\begin{equation}
{\cal A}=(Q,S,\Sigma,\Theta,\delta,|q_{0}\rangle,s_{0},S_{acc},S_{rej})
\end{equation}
where:

\begin{enumerate}
\item $Q$ is a finite set of orthonormal quantum basis states.
\item $S$ is a finite set of classical states.
\item $\Sigma$ is a finite alphabet of input symbols and let
$\Sigma'=\Sigma\cup \{|\hspace{-1.5mm}c,\$\}$, where $|\hspace{-1.5mm}c$ will be used as the left end-marker and $\$$ as the right end-marker.
\item $|q_0\rangle\in Q$ is the initial quantum state.
\item $s_0$ is the initial classical state.
\item $S_{acc}\subset S$ and $S_{rej}\subset S$, where $S_{acc}\cap S_{rej}=\emptyset$ are  sets of
the classical accepting and rejecting states, respectively.
\item $\Theta$ is a quantum transition function
\begin{equation}
\Theta: S\setminus(S_{acc}\cup S_{rej})\times \Sigma'\to U(H(Q))\cup O(H(Q)),
\end{equation}
where U(H(Q)) and O(H(Q)) are sets of unitary operations and projective measurements on the Hilbert space generated by quantum states from $Q$.

\item $\delta$ is a classical transition function.
If the automaton ${\cal A}$ is in the classical state $s$, its tape head is  scanning a symbol $\sigma$ and its quantum memory is in the quantum state $|\psi\rangle$, then ${\cal A}$ performs quantum and classical transitions  as follows.
\begin{enumerate}
\item If $\Theta(s,\sigma)\in U(H(Q))$, then the unitary operation $\Theta(s,\sigma)$ is applied on the current state $|\psi\rangle$ of quantum memory to produce a new quantum state. The automaton performs, in addition, the following classical transition function
\begin{equation}
\delta: S\setminus(S_{acc}\cup S_{rej})\times \Sigma'\to S\times \{-1, 0,1\}.
\end{equation}
If $\delta(s,\sigma)=(s',d)$, then the new classical state of the automaton is $s'$ and its head moves in the direction $d$.

\item If $\Theta(s,\sigma)\in O(H(Q))$, then the measurement operation $\Theta(s,\sigma)$ is applied on the current state $|\psi\rangle$.
 Suppose the  measurement $\Theta(s,\sigma)$ is specified by operators $\{P_1,\ldots, P_n\}$  and its corresponding classical outcome is from the set $N_{\Theta(s,\sigma)}=\{1,2,\cdots,n\}$.
The classical transition function $\delta$ can be then specified as follow
\begin{equation}
\delta: S\setminus(S_{acc}\cup S_{rej})\times \Sigma'\times N_{\Theta(s,\sigma)}\to S\times \{-1, 0,1\}.
\end{equation}
In such a case,  if $i$ is the classical outcome of the measurement, then the
current quantum state $|\psi\rangle$ is changed to
the  state $P_{i}|\psi\rangle/ \|P_{i}|\psi\rangle\|$. Moreover,  if
$\delta{(s,\sigma)}(i) =(s',d)$, then  the new classical state of the automaton is $s'$ and its head moves in the direction $d$.
\end{enumerate}
The automaton halts and accepts (rejects) the input when it enters a classical accepting (rejecting) state (from $S_{acc}$($S_{rej}$)).

\end{enumerate}
\end{Df}

The computation of a 2QCFA
${\cal A}=(Q,S,\Sigma,\Theta,\delta,|q_{0}\rangle,s_{0},S_{acc},S_{rej})$ on an input $w\in \Sigma^*$ starts with the string $|\hspace{-1.5mm}cx\$$ on the input tape. At the start, the tape head of the automation is positioned on the left end-marker and the automaton begins the computation in the classical initial state and
in the initial quantum state. After that,
in each  step, if  its  classical state  is $s$, its tape head reads a symbol $\sigma$ and its quantum state is $|\psi\rangle$, then the automaton changes its states and makes its head movement following the steps described in the definition.

 The computation will end whenever the resulting classical state
is in \linebreak[0] {$S_{acc}\cup S_{rej}$}. Therefore, similarly to the definition
of accepting and rejecting probabilities  for 2QFA \cite{Kon97},
the accepting and rejecting probabilities $Pr[{\cal A}\ {accepts}\  w]$ and $Pr[{\cal A}\ \text{rejects}\  w]$ for an input $w$ are, respectively, the sums of all
accepting probabilities and all rejecting probabilities before the
end of  computation on the input $w$.

\begin{Rm}
1QCFA are one-way versions of 2QCFA \cite{ZhgQiu112}. In this paper, we only use 1QCFA in which a unitary transformation is applied in every step after scanning a symbol and an measurement is performed  after scanning the right end-marker. Such model is an measure-once 1QCFA and corresponds to a variant of MO-1QFA.
\end{Rm}

Three basic modes of language acceptance to be considered here are the following ones:
Let $L\subset \Sigma^*$ and $0<\varepsilon\leq\frac{1}{2}$. A finite automaton ${\cal A}$ recognizes $L$ with a {\em one-sided error} $\varepsilon$ if, for $w\in \Sigma^*$,
\begin{enumerate}
\item[1.] $\forall w\in L$, $Pr[{\cal A}\  \text{accepts}\  w]=1$, and
\item[2.] $\forall w\notin L$, $Pr[{\cal A}\ \text{rejects}\  w]\geq 1-\varepsilon$.
\end{enumerate}

Let $0<\varepsilon<\frac{1}{2}$. A finite automaton ${\cal A}$ recognizes $L$ with an {\em error probability} $\varepsilon$ if, for $w\in \Sigma^*$,
\begin{enumerate}
\item[1.] $\forall w\in L$, $Pr[{\cal A}\  \text{accepts}\  w]\geq 1-\varepsilon$, and
\item[2.] $\forall w\notin L$, $Pr[{\cal A}\ \text{rejects}\  w]\geq 1-\varepsilon$.
\end{enumerate}

Let  $0<\lambda<1$.
A language $L$ is said to be accepted by a finite automation   ${\cal A}$  with an {\em isolated cut point} $\lambda$ if there exists $\delta>0$, for $w\in \Sigma^*$, such that
\begin{enumerate}
\item[1.] $\forall w\in L$, $Pr[{\cal A}\  \text{accepts}\  w]\geq \lambda+\delta$, and
\item[2.] $\forall w\notin L$, $Pr[{\cal A}\ \text{accepts}\  w]\leq \lambda-\delta$.
\end{enumerate}

Obviously, for $0<\varepsilon<\frac{1}{2}$,
 one-sided error acceptance is stricter than an error probability acceptance and the error probability acceptance is stricter than an isolated cut point acceptance.

Language acceptance is a special case of so called promise problem solving.
A {\em promise problem} is a pair $A = (A_{yes}, A_{no})$, where $A_{yes}$, $A_{no}\subset \Sigma^*$
are disjoint sets. Languages may be viewed as promise problems that obey the additional constraint
$A_{yes}\cup A_{no}=\Sigma^*$.

 A promise problem $A = (A_{yes}, A_{no})$ is solved by an exact 1QCFA ${\cal A}$ if
\begin{enumerate}
\item[1.] $\forall w\in A_{yes}$, $Pr[{\cal A}\  \text{accepts}\  w]=1$, and
\item[2.] $\forall w\in  A_{no}$, $Pr[{\cal A}\ \text{rejects}\  w]=1$.
\end{enumerate}

\section{State complexities for strings equality problems}

Strings equality problem is a basic problem in communication complexity \cite{KusNis97b} defined as follows.  Let Alice and Bob
be the two communicating parties. Alice is given as an input $x\in\{0,1\}^n$ and Bob is given as an input $y\in\{0,1\}^n$. They wish to compute the value of the function $EQ(x,y)$  defined to be 1 if $x=y$ and 0 otherwise.

We use a  promise problem to model the promise version of strings equality problem studied in \cite{Buh98,Buh10}.
  For the alphabet $\Sigma=\{0,1,\#\}$ and $n\in {\mathbb{Z}}^+$, let us consider the promise problem $A_{EQ}(n)=(A_{yes}(n), A_{no}(n))$, where $A_{yes}(n)=\{x\#y\,|\,x=y,x,y\in\{0,1\}^n\}$ and $A_{no}(n)=\{x\#y\,|\,x\neq y,x,y\in\{0,1\}^n, H(x,y)= \frac{n}{2}\}$.

\begin{Th}\label{A_{EQ}(n)}
The promise problem $A_{EQ}(n)$ can be solved  by an exact 1QCFA  ${\cal A}(n)$ with $n$ quantum basis states and ${\bf O}(n)$ classical states, whereas the sizes of the corresponding 1DFA  are $2^{{\bf \Omega}(n)}$.
\end{Th}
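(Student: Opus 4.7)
The plan is to construct an explicit exact 1QCFA for the upper bound and to combine a Myhill--Nerode-type equivalence-class argument with the Frankl--R\"odl forbidden-distance theorem for the lower bound.

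For the upper bound I would use the $n$-dimensional Hilbert space spanned by $|1\rangle,\ldots,|n\rangle$ together with $O(n)$ classical states that simply track the head position inside each half of $x\#y$ (one counter of length $n$ for $x$, one of length $n$ for $y$, plus initial, separator, and halting states). The key behaviour is: reading $|\hspace{-1.5mm}c$, prepare the uniform superposition $|\psi_0\rangle=\tfrac{1}{\sqrt{n}}\sum_{i=1}^n|i\rangle$; while in the classical state corresponding to position $j$, read the input bit $b\in\{0,1\}$ and apply the phase-flip $U_j=I-2|j\rangle\langle j|$ when $b=1$ and the identity when $b=0$; on $\$$, perform the projective measurement $\{|\psi_0\rangle\langle\psi_0|,\,I-|\psi_0\rangle\langle\psi_0|\}$ and accept on the first outcome, reject on the second. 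A one-step induction shows that after processing both halves the quantum state is $\tfrac{1}{\sqrt{n}}\sum_i(-1)^{x_i\oplus y_i}|i\rangle$; when $x=y$ this equals $|\psi_0\rangle$ so acceptance is certain, and when $H(x,y)=n/2$ exactly half of the phases are $-1$ so $\langle\psi_0|\cdot\rangle=\tfrac{1}{n}(n/2-n/2)=0$ and rejection is certain. Hence $A_{EQ}(n)$ is solved exactly with $n$ quantum basis states and $O(n)$ classical states.

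For the lower bound I would fix any 1DFA $M$ with state set $Q$ solving $A_{EQ}(n)$ and look at the partition of $\{0,1\}^n$ defined by $x\sim x'$ iff $M$ reaches the same state after reading $x\#$ and after reading $x'\#$; clearly $|Q|$ is at least the number of classes. If $x\neq x'$ belong to a common class, then feeding the suffix $x$ must produce identical outcomes on $x\#x$ and on $x'\#x$, so (since $x\#x\in A_{yes}$ is accepted) $x'\#x$ cannot be in $A_{no}$, which forces $H(x,x')\neq n/2$. Each class is therefore a subset of $\{0,1\}^n$ in which no two distinct points lie at Hamming distance exactly $n/2$; the Frankl--R\"odl theorem bounds any such subset by $(2-\delta)^n$ for a universal constant $\delta>0$, giving $|Q|\geq 2^n/(2-\delta)^n=2^{\Omega(n)}$.

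The principal obstacle is precisely this lower bound. A naive fooling-set construction using equidistant binary codes is capped by Plotkin's bound at $O(n)$ fooling strings and only gives $|Q|=\Omega(n)$; achieving the exponential separation is not a purely automata-theoretic fact and genuinely depends on the forbidden-distance extremal result that Buhrman, Cleve, Watrous and de Wolf invoke to lower-bound the one-round deterministic communication complexity of promise equality.
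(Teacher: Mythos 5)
Your upper bound is essentially the paper's construction: the same uniform superposition $\frac{1}{\sqrt n}\sum_i|i\rangle$ prepared on the left end-marker, the same $O(n)$ position-tracking classical states driving conditional phase flips $(-1)^{x_i}$ and then $(-1)^{y_i}$ on $|i\rangle$, and a final test of whether the state is still the uniform superposition; the paper merely implements your measurement $\{|\psi_0\rangle\langle\psi_0|,\,I-|\psi_0\rangle\langle\psi_0|\}$ as a unitary $U_f$ with first row $\frac{1}{\sqrt n}(1,\ldots,1)$ followed by a computational-basis measurement and a check for outcome $|1\rangle$, which is the same measurement after a change of basis, and the analysis $\frac1n\sum_i(-1)^{x_i+y_i}=1$ versus $=0$ is identical. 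Where you genuinely diverge is the lower bound. The paper gets $2^{{\bf \Omega}(n)}$ by citing the $0.007n$ lower bound on the deterministic communication complexity of promise equality (Buhrman et al.) together with the standard fact that a $|Q|$-state 1DFA yields a $\log|Q|$-bit one-way protocol for the induced communication problem; you instead argue directly on the automaton, observing that two prefixes $x\#$ and $x'\#$ reaching the same state force $H(x,x')\neq n/2$ (via the suffix $x$), so each reachable state corresponds to a subset of $\{0,1\}^n$ avoiding distance exactly $n/2$, and then you invoke the Frankl--R\"{o}dl forbidden-distance theorem to cap each such set at $(2-\delta)^n$. This is the same mathematics---the cited communication bound is itself proved from Frankl--R\"{o}dl---but your route is self-contained at the automaton level and makes the extremal-combinatorics dependence explicit, whereas the paper's is a two-line reduction to a black-box result. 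One caveat applies to both: the argument (and the meaningfulness of $A_{no}(n)$) implicitly requires $n\equiv 0 \pmod 4$; if $n/2$ is odd, the even-weight strings already form a distance-$n/2$-avoiding set of size $2^{n-1}$ and Frankl--R\"{o}dl does not apply, so you should state that restriction, which the paper also leaves tacit.
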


\begin{figure}[htbp]
\begin{tabular}{|l|}
    \hline

\begin{minipage}[t]{0.93\textwidth}
\begin{enumerate}
\item[1.] Read the left end-marker $\ |\hspace{-1.5mm}c$,  perform $U_s$ on the initial quantum state $|1\rangle$,  change its classical state to $\delta(s_0,\ |\hspace{-1.5mm}c )=s_1$, and move the tape head one cell to the right.

\item[2.] Until the currently  scanned symbol $\sigma$ is not $\#$, do the following:
 \begin{enumerate}
 \item[2.1] Apply $\Theta(s_i,\sigma)=U_{i,\sigma}$ to the current quantum state.
 \item[2.2] Change the classical state $s_i$ to $s_{i+1}$ and move the tape head one cell to the right.
\end{enumerate}
\item[3.] Change the classical state $s_{n+1}$ to  $s_1$ and move the tape head one cell to the right.

\item[4.] While the currently  scanned symbol $\sigma$ is not the right end-marker $\$$, do the following:
 \begin{enumerate}
 \item[2.1] Apply $\Theta(s_i,\sigma)=U_{i,\sigma}$ to the current quantum state.
 \item[2.2] Change the classical state $s_i$ to $s_{i+1}$ and move the tape head one cell to the right.
\end{enumerate}

\item[5.] When the right end-marker  is reached,    perform $U_{f}$ on the current quantum state,
measure the current quantum state with $M=\{P_i=|i\rangle\langle i|\}_{i=1}^{n}$.   If the outcome is $|1\rangle$, accept the input; otherwise reject the input.

\end{enumerate}

\end{minipage}\\

\hline
\end{tabular}
 \centering\caption{  Description of the behavior of ${\cal A}(n)$ when solving the promise problem $A_{EQ}(n)$. }\label{f3}
\end{figure}
\begin{proof}
Let $x=x_1\cdots x_n$ and $y=y_1\cdots y_n$ with $x,y\in\{0,1\}^n$.  Let us consider a 1QCFA ${\cal A}(n)$ with $n$   quantum basis states
$\{|i\rangle:i=1,2,\ldots,n\}$. ${\cal A}(n)$ will start in  the
quantum state $|1\rangle=(1,0,\ldots,0)^T$. We use classical states $s_i\in S$ ($1\leq i\leq n+1$) to point out the positions of the tape head that will provide some information for  quantum transformations. If  the classical state of ${\cal A}(n)$ will be $s_i$ ($1\leq i\leq n$) that will  mean that the next scanned symbol of the tape head is the $i$-th symbol of $x$($y$) and $s_{n+1}$ means that the next scanned symbol of  the tape head  is  $\#$($\$$).
 The automaton proceeds as shown in Figure \ref{f3}, where
\begin{eqnarray*}
&U_s|1\rangle=\frac{1}{\sqrt{n}}\sum_{i=1}^n|i\rangle;\\
&U_{i,\sigma}|i\rangle=(-1)^{\sigma}|i\rangle \ \  \mbox{and } \ \  U_{i,\sigma}|j\rangle=|j\rangle \ \mbox{for}\ j\neq i;\\
&U_f(\sum_{i=1}^n\alpha_i|i\rangle)=(\frac{1}{\sqrt{n}}\sum_{i=1}^n\alpha_i)|1\rangle+\cdots.
\end{eqnarray*}

 Transformations $U_{s}$ and $U_{f}$ are unitary. The first column of $U_{s}$ is $\frac{1}{\sqrt{n}}(1,\ldots,1)^T$ and the first row of $U_f$ is $\frac{1}{\sqrt{n}}(1,\ldots,1)$.

The quantum state after scanning the left end-marker is $|\psi_1\rangle= U_s|1\rangle=\sum_{i=1}^n\frac{1}{\sqrt{n}}\linebreak[0]|i\rangle$, the quantum state after Step 2 is $|\psi_2\rangle=\sum_{i=1}^n\frac{1}{\sqrt{n}}(-1)^{x_i}|i\rangle$, and the quantum state after Step 4 is $|\psi_{3}\rangle=\sum_{i=1}^n\frac{1}{\sqrt{n}}(-1)^{x_i+y_i}|i\rangle$. The quantum state after scanning  the right end-marker  is therefore
\begin{equation}
|\psi_4\rangle=U_f\left(\sum_{i=1}^n\frac{1}{\sqrt{n}}(-1)^{x_i+y_i}|i\rangle\right)=U_f\frac{1}{\sqrt{n}} \left(
                                                  \begin{array}{c}
                                                    (-1)^{x_1+y_1} \\
                                                    (-1)^{x_2+y_2} \\
                                                    \vdots \\
                                                    (-1)^{x_n+y_n}\\
                                                  \end{array}
                                                \right)
                                                \end{equation}
\begin{equation}
 =\left(
                                                  \begin{array}{c}
                                                    \frac{1}{n} \sum_{i=1}^n (-1)^{x_i+y_i} \\
                                                    \vdots \\
                                                    \vdots \\
                                                  \end{array}
                                                \right).
  \end{equation}

If the input string $w\in A_{yes}(n)$, then $x_i=y_i$ for $1\leq i\leq n$ and  $|\frac{1}{n} \sum_{i=1}^n (-1)^{x_i+y_i}|^2\linebreak[0]=1$. The amplitude of $|1\rangle$ is 1, and that means $|\psi_4\rangle=|1\rangle$.
Therefore the input will be accepted with probability 1 at the measurement in Step 5.

If the input string $w\in A_{no}(n)$, then $H(x,y)=\frac{n}{2}$. Therefore   the probability of getting outcome $|1\rangle$ in the measurement in Step 5 is $|\frac{1}{n} \sum_{i=1}^n (-1)^{x_i+y_i}|^2=0$.

The deterministic communication complexity for the promise version of  strings equality problem is at least $0.007n$ \cite{Buh98,Buh10}. Therefore,  the sizes of the corresponding 1DFA are $2^{{\bf \Omega}(n)}$ \cite{KusNis97b}.
\end{proof}

\section{State succinctness for 2QCFA}

 State succinctness for 2QCFA was  explored by  Yakaryilmaz, Zheng and others \cite{Yak10,Zhg12}. In \cite{Zhg12},
 Zheng et al. showed the state succinctness for polynomial time 2QCFA for  families of promise problems and for exponential time 2QCFA for a family of languages. In this section, we  show the state succinctness for linear time 2QCFA and polynomial time 2QCFA for two families of languages.

 \subsection{State succinctness for the language $L(p)$}

\begin{Th}
For any $p\in {\mathbb{Z}}^+$ and $0<\varepsilon\leq\frac{1}{2}$, the language $L(p)$  can be recognized  with one-sided error $\varepsilon$  by  a 2QCFA ${\cal A}(p,\varepsilon)$ with 2 quantum basis states and a  constant number of classical states (neither depending on $p$ nor on $\varepsilon$) in  a  linear expected running time ${\bf O}(\frac{1}{ \varepsilon}p^2n)$,   where $n$ is the length of  input.
\end{Th}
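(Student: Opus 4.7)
The plan is to adapt the Ambainis--Watrous ``quantum rotation plus amplification'' technique to $L(p)$, using a single qubit (so $|Q|=2$) and $O(1)$ classical control states. On the first left-to-right sweep the automaton will verify classically that the input lies in $a^{*}$, rejecting if any other symbol is encountered. It will then enter a main loop, each iteration of which (i) rewinds the head to the left end-marker and resets the qubit to $|0\rangle$, (ii) scans left to right, applying the rotation $R(\pi/p)$ to the qubit for every $a$ read, (iii) on reaching the right end-marker $\$$, measures the qubit in the computational basis and halts rejecting on outcome $|1\rangle$, and (iv) on outcome $|0\rangle$ performs a ``quantum coin toss'' by applying $R(\phi)$ to a fresh $|0\rangle$ with $\sin^{2}\phi=\delta:=\tfrac{1}{2}\varepsilon\sin^{2}(\pi/p)$ and measuring: outcome $|1\rangle$ triggers halt-and-accept, outcome $|0\rangle$ triggers a new iteration.

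The correctness analysis will be a direct calculation. After step~(ii) on input $a^{n}$ the qubit is $\cos(n\pi/p)|0\rangle+\sin(n\pi/p)|1\rangle$. For $w=a^{kp}\in L(p)$ the amplitude on $|1\rangle$ vanishes, so the reject branch in~(iii) is never taken; acceptance occurs in step~(iv) with probability $\delta$ per round, and the automaton will accept with probability $1$. For $a^{n}$ with $n\not\equiv 0\pmod p$, the per-round rejection probability is $q=\sin^{2}(n\pi/p)\ge\sin^{2}(\pi/p)$ and the per-round acceptance probability is $(1-q)\delta$; summing the resulting geometric series will give total acceptance probability $(1-q)\delta/(q+(1-q)\delta)\le\delta/q\le\varepsilon/2\le\varepsilon$, which is the required one-sided error. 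Each iteration costs $O(n)$ classical steps together with $O(n)$ applications of $R(\pi/p)$, and the halting probability per round is at least $\delta$, so the expected number of iterations is $O(1/\delta)=O(p^{2}/\varepsilon)$, yielding expected running time $O(p^{2}n/\varepsilon)$ as claimed.

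The point that most needs checking is that the classical-state count really stays constant independently of both $p$ and $\varepsilon$. All the parameter dependence will be hidden inside the two rotation angles $\pi/p$ and $\phi$, which live in the quantum transition function $\Theta$ (whose specification the theorem allows to depend on $p$ and $\varepsilon$); the classical control will only need $O(1)$ states for the roles ``verify $a^{*}$'', ``scan right'', ``scan left'', ``post-measurement dispatch'', together with the accept and reject states. The main obstacle will therefore be the calibration of $\delta$: it must be small enough that non-members are rejected with probability $\ge 1-\varepsilon$, yet large enough that members are accepted within $O(p^{2}/\varepsilon)$ expected rounds rather than more. The choice $\delta=\tfrac{1}{2}\varepsilon\sin^{2}(\pi/p)$ is exactly the sweet spot, as the two bounds above demonstrate, and the entire formal proof will reduce to making these probability estimates and runtime computations precise.
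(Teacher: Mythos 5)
Your proposal is correct and follows essentially the same approach as the paper's proof: a single-qubit rotation by $\pi/p$ per input symbol, a measurement at the right end-marker that rejects on seeing the rotated basis state, a low-probability acceptance ``coin'' of order $\varepsilon/p^2$ applied only when that measurement succeeds, and a geometric-series analysis of the infinitely repeated loop together with the bound $\sin^2(\pi/p)\ge 4/p^2$ for the running time. The only differences are cosmetic: your acceptance probability $\tfrac{1}{2}\varepsilon\sin^2(\pi/p)$ plays the role of the paper's $4\varepsilon/p^2$, and you add a harmless initial sweep to verify the input is in $a^*$.
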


\begin{proof}
The main idea of the proof is as follows: We consider a 2QCFA ${\cal A}(p,\varepsilon)$ with 2 orthogonal  quantum basis states
$|q_0\rangle$ and $|q_1\rangle$. ${\cal A}(p,\varepsilon)$ starts computation in the initial
quantum state $|q_0\rangle$ and with the tape head on the left end-marker. Every time when ${\cal A}(p,\varepsilon)$ reads a symbol `a', the current quantum state is rotated by the angle $\frac{\pi}{p}$. When the right end-marker $\$$ is reached, ${\cal A}(p,\varepsilon)$ measures the current quantum state. If the resulting quantum state is $|q_1\rangle$, the input string  is rejected, otherwise the automaton proceeds as shown in Figure \ref{f1}, where

\begin{figure}[htbp]
\begin{tabular}{|l|}

    \hline
    \begin{minipage}[t]{0.93\textwidth}
Repeat the following  ad infinity:
 \begin{enumerate}
\item[1.] Move the tape head to the right of the left end-marker.
\item[2.] Until the scanned symbol is the right end-marker, apply $U_{p}$ to the current
 quantum state and move the head one cell to the right.
\item[3] Measure the current quantum state in the basis $\{|q_0\rangle, |q_1\rangle\}$.
 \begin{enumerate}
\item[3.1] If quantum outcome is $|q_1\rangle$, reject the input.
\item[3.2] Otherwise apply $U_{p,\varepsilon}$ to the current quantum state $|q_0\rangle$.
 \end{enumerate}
\item[4] Measure the quantum state in the basis $\{|q_0\rangle, |q_1\rangle\}$. If the result is $|q_0\rangle$,
 accept the input; otherwise apply a unitary operation to change the quantum state from $|q_1\rangle$ to $|q_0\rangle$
 and start a new iteration.
\end{enumerate}
   \end{minipage}\\
\hline
\end{tabular}
 \centering\caption{ Description of the behavior of ${\cal A}(p,\varepsilon)$ when recognizing the language $L({p})$. }\label{f1}
\end{figure}

\begin{equation}\label{matrix}
U_{p}=\left(
  \begin{array}{cc}
    \cos \frac{\pi}{p}  & -\sin \frac{\pi}{p} \\
    \sin \frac{\pi}{p}  & \cos \frac{\pi}{p} \\
  \end{array}
\right) \ \ {and}\ \
 U_{p,\varepsilon}=\left(
  \begin{array}{cc}
    \frac{1}{\sqrt{p^2/4\varepsilon}}  & -\frac{\sqrt{p^2/4\varepsilon-1}}{\sqrt{p^2/4\varepsilon}}\\
    \frac{\sqrt{p^2/4\varepsilon-1}}{\sqrt{p^2/4\varepsilon}}  & \frac{1}{\sqrt{p^2/4\varepsilon}}  \\
  \end{array}
\right).
\end{equation}
\begin{Lm}\label{in-L(p)}
If the input $w\in L(p)$, then the quantum state of  ${\cal A}(p,\varepsilon)$ after Step 2
is one of the quantum states $\pm|q_0\rangle$.
\end{Lm}
\begin{proof}
If $w\in L(p)$, then $|w|=n=kp$, where $k\in {\mathbb{Z}}^+$.
Starting with the state $|q_0\rangle$, ${\cal A}(p,\varepsilon)$ changes its quantum state to  $|q\rangle=(U_{p})^n|q_0\rangle$ after Step 2, where
\begin{equation}
|q\rangle=(U_{p})^n|q_0\rangle=\left(
  \begin{array}{cc}
    \cos \frac{\pi}{p}  & -\sin \frac{\pi}{p} \\
    \sin \frac{\pi}{p}  & \cos \frac{\pi}{p} \\
  \end{array}
\right)^n|q_0\rangle=\left(
  \begin{array}{cc}
    \cos \frac{n\pi}{p}  & -\sin \frac{n\pi}{p} \\
    \sin \frac{n\pi}{p}  & \cos \frac{n\pi}{p} \\
  \end{array}
\right)|q_0\rangle
\end{equation}
\begin{equation}
=\left(
  \begin{array}{cc}
    \cos k\pi  & -\sin  k\pi \\
    \sin  k\pi  & \cos  k\pi \\
  \end{array}
\right)|q_0\rangle=\pm|q_0\rangle.
\end{equation}
\end{proof}
\begin{Lm}\label{not-in-L-p}
If the input $w\not\in L(p)$, then ${\cal A}(p,\varepsilon)$ rejects $w$ after Step 3 with a probability at least $4/p^2$.
\end{Lm}
\begin{proof}
Suppose $n=|w|=kp+i$, where $k\in {\mathbb{Z}}^+$ and $i\in \{1,2,\cdots,p-1\}$. The quantum state of ${\cal A}(p,\varepsilon)$ after Step 2 will be
\begin{equation}
|q\rangle=(U_{p})^n|q_0\rangle=\left(
  \begin{array}{cc}
    \cos \frac{\pi}{p}  & -\sin \frac{\pi}{p} \\
    \sin \frac{\pi}{p}  & \cos \frac{\pi}{p} \\
  \end{array}
\right)^n|q_0\rangle=\left(
  \begin{array}{cc}
    \cos \frac{n\pi}{p}  & -\sin \frac{n\pi}{p} \\
    \sin \frac{n\pi}{p}  & \cos \frac{n\pi}{p} \\
  \end{array}
\right)|q_0\rangle
\end{equation}
\begin{equation}
=\left(
  \begin{array}{cc}
    \cos \frac{i\pi}{p}  & -\sin \frac{i\pi}{p}  \\
    \sin \frac{i\pi}{p}  & \cos \frac{i\pi}{p}  \\
  \end{array}
\right)|q_0\rangle=\cos \frac{i\pi}{p}  |q_0\rangle+\sin \frac{i\pi}{p}  |q_1\rangle.
\end{equation}
The probability of observing  $|q_1\rangle$ is
$\sin^2 \frac{i\pi}{p}$ in Step 3.

Let $f(x)=sin(x\pi)-2x$. We have
 $f''(x)=-\pi^2\sin(x\pi)\leq 0$ when $x\in
[0,1/2]$. Therefore, $f(x)$ is concave in  the interval $[0,1/2]$, and $f(0)=f(1/2)=0$. So, for any $x\in[0,1/2]$,
$f(x)\geq 0$, that is $\sin(x\pi)\geq 2x$.  Therefore,
\begin{equation}
\sin^2 \frac{i\pi}{p}\geq \sin^2 \frac{\pi}{p}\geq \left(\frac{2}{p}\right)^2=\frac{4}{p^2}.
\end{equation}
\ \
\end{proof}

\begin{Lm}\label{acc-rej}
If the input $w\in L(p)$, then ${\cal A}(p,\varepsilon)$ accepts $w$ after Step 4 with the probability $\frac{4 \varepsilon}{p^2}$.  If the input $w\not\in L(p)$, then ${\cal A}(p,\varepsilon)$ accepts $w$ after Step 4 with a probability less than $\frac{4 \varepsilon}{p^2}$.
\end{Lm}
\begin{proof}
If $w\in L(p)$, then the quantum state of ${\cal A}(p,\varepsilon)$ after Step 2 will be $|q_0\rangle$, according to Lemma \ref{in-L(p)}. After Step 3 the quantum state will be
\begin{equation}
|q\rangle=U_{p,\varepsilon}|q_0\rangle=\left(
  \begin{array}{cc}
    \frac{1}{\sqrt{p^2/4\varepsilon}}  & -\frac{\sqrt{p^2/4\varepsilon-1}}{\sqrt{p^2/4\varepsilon}}\\
    \frac{\sqrt{p^2/4\varepsilon-1}}{\sqrt{p^2/4\varepsilon}}  & \frac{1}{\sqrt{p^2/4\varepsilon}}  \\
  \end{array}\right)|q_0\rangle= \frac{1}{\sqrt{p^2/4\varepsilon}}|q_0\rangle+ \frac{\sqrt{p^2/4\varepsilon-1}}{\sqrt{p^2/4\varepsilon}}|q_1\rangle.
\end{equation}
Therefore, after the measurement in Step 4, the input is accepted with the probability $\frac{4 \varepsilon}{p^2}$.

 If $w\not\in L(p)$, then the probability that the quantum state of ${\cal A}(p,\varepsilon)$ after Step 2 is $|q_0\rangle$ is less than 1. Therefore, ${\cal A}(p,\varepsilon)$ accepts $w$ after Step 4 with a probability less than $\frac{4 \varepsilon}{p^2}$.
\end{proof}

In Step 4, if ${\cal A}(p,\varepsilon)$ accepts
the input string, then ${\cal A}(p,\varepsilon)$ halts.  Otherwise, the
resulting quantum state of the measurement is $|q_1\rangle$ and an application of the operation $|q_0\rangle\langle q_1|$ results in state $|q_0\rangle$ in which then
${\cal A}(p,\varepsilon)$ starts a new iteration.

If $w\in L(p)$, then according to Lemma \ref{acc-rej}, the probability of accepting the input in one iteration is
\begin{equation}
 P_a=\frac{4 \varepsilon}{p^2}
\end{equation}
and the probability of rejecting the input in one iteration is
\begin{equation}
 P_r=0.
\end{equation}
If  the whole process is repeated for ad infinitum, then the accepting probability is
\begin{equation}
Pr[{\cal A}(p,\varepsilon)\  \text{accepts}\  w] =\sum_{i\geq
0}(1-P_a)^i(1-P_r)^iP_a
=\sum_{i\geq
0}(1-P_a)^iP_a
=\frac{P_a}{P_a}=1.
\end{equation}

If $w\not\in L(p)$, then according to Lemmas  \ref{not-in-L-p}  and \ref{acc-rej}, the probability of accepting the input in one iteration is
\begin{equation}
 P_a<\frac{4 \varepsilon}{p^2}
\end{equation}
and the probability of rejecting the input in one iteration is
\begin{equation}
 P_r>\frac{4}{p^2} .
\end{equation}
If the whole process is repeated indefinitely, then the probability that ${\cal A}(p,\varepsilon)$ rejects the input $w$  is
\begin{equation}
Pr[{\cal A}(p,\varepsilon)\  \text{rejects}\  w] =\sum_{i\geq
0}(1-P_a)^i(1-P_r)^iP_r
=\frac{P_r}{P_a+P_r-P_aP_r}>\frac{P_r}{P_a+P_r}
\end{equation}
\begin{equation}
>\frac{\frac{4}{p^2}}{\frac{4 \varepsilon}{p^2}+\frac{4}{p^2}}
=\frac{1}{\varepsilon+1}>1-\varepsilon.
\end{equation}
{\bf Time analysis:} Steps 1 to 4 take  time ${\bf O}(n)$. The halting probability  is in both cases  ${\bf \Omega}\left(\frac{ \varepsilon}{p^2}\right)$, so the expected number of repetitions of the above process is, in both cases, {\bf O}($\frac{p^2}{ \varepsilon}$).
Hence the expected running time of ${\cal A}(p,\varepsilon)$ is
{\bf O}($\frac{1}{ \varepsilon}p^2n$).
\end{proof}

The most important work in designing an automaton is to design its transition function.
Similar unitary matrixes of $U_p$ (the quantum transition function after scanning a symbol of input) and proof methods in the previous Theorem can be found in \cite{Amb98,Amb09,Mer01,BMP06,Yak10}. The state complexity for $L(P)$ is relative to the error probability ${\varepsilon}$ in
\cite{Amb98,Amb09,Mer01,BMP06}.
  But in this paper  the state complexity for $L(P)$ is not relative to the error probability ${\varepsilon}$, since we use a
special matrix $ U_{p,\varepsilon}$, which is never used in other papers.

\begin{Th}\label{th-EQtoDFA}
For any integer $p$, any  polynomial expected running time 2PFA recognizing $L(p)$ with error probability $\varepsilon<\frac{1}{2}$ has at least $\sqrt[3]{(\log p)/b}$ states, where $b$ is a constant.
\end{Th}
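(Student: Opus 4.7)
The plan is to derive a contradiction from the hypothesis that a polynomial-time 2PFA $M$ with fewer than $\sqrt[3]{(\log p)/b}$ states recognizes $L(p)$ with error probability $\varepsilon<\tfrac{1}{2}$, by invoking the Dwork-Stockmeyer lower-bound machinery for 2PFAs on unary inputs. The strategy is to show that such an $M$ can ``detect periodicity'' in unary strings only up to a period of at most $2^{bs^{3}}$, whereas accepting $L(p)$ forces this detectable period to be at least $p$, so $p\le 2^{bs^{3}}$.

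First, I would reduce the two-way behavior of $M$ on the unary input $|\hspace{-1.5mm}c\, a^{n}\, \$$ to a small Markov system governed by $s\times s$ substochastic matrices. For each ordered pair $(q,q')$ of states, I would define entries capturing the probability that $M$, entering the $a$-block in state $q$, first exits to the right in state $q'$; analogous matrices capture ``exit-left,'' ``enter-from-right,'' and ``halt-inside'' behavior. By standard crossing-sequence or absorption arguments, the overall acceptance probability $p_{acc}(n)=\Pr[M\text{ accepts }a^{n}]$ can be expressed as a rational function of the $n$-th power of a single $s\times s$ stochastic matrix. Polynomial expected running time guarantees that the transient part decays quickly, so $p_{acc}(n)$ is, after an initial segment, ultimately almost periodic in $n$.

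Next, I would invoke the key algebraic estimate: the effective period of $p_{acc}(n)$ is at most $2^{bs^{3}}$ for a universal constant $b$, meaning that $p_{acc}$ agrees to within error smaller than $1-2\varepsilon$ on every arithmetic progression of common difference $2^{bs^{3}}$ past the transient phase. The cubic exponent arises from bounding the algebraic complexity of the eigenvalues of the reduced stochastic matrix: polynomial running time forces these eigenvalues to be algebraic numbers of degree and height polynomial in $s$, so the least common multiple of the orders of the roots of unity appearing in the spectral decomposition is at most $2^{bs^{3}}$. Since $M$ accepts every $a^{kp}$ with probability at least $1-\varepsilon$ and rejects every $a^{m}$ with $p\nmid m$ with probability at least $1-\varepsilon$, the true period of $p_{acc}$ for large $n$ must be an exact multiple of $p$. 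Hence $p$ divides some integer bounded by $2^{bs^{3}}$, giving $p\le 2^{bs^{3}}$ and thus $s\ge \sqrt[3]{(\log p)/b}$.

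The main obstacle is pinning down the cubic exponent in the second step. A crude dimensional argument on $s\times s$ stochastic matrices already yields a bound of the form $2^{O(s^{2})}$ on the number of distinct ``period patterns,'' but this is too weak for the claimed $s^{3}$; sharpening to $2^{O(s^{3})}$ requires exploiting the polynomial expected running time to control the bit-length of the algebraic numbers describing the reduced stochastic system, and then bounding the torsion order of its spectrum. This is exactly the delicate algebraic core of the Dwork-Stockmeyer time-complexity-gap theorem, and the work consists in adapting that analysis carefully to the specific unary structure of $L(p)$ to obtain the stated cubic-root bound.
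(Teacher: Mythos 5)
There is a genuine gap, and also a cheaper route that you passed by. The paper's proof is a two-line black-box application of the Dwork--Stockmeyer time-complexity-gap theorem, which it cites in exactly the form needed: for every $\varepsilon<1/2$ and every polynomial time bound there is a constant $b$ such that any language recognized by a $c$-state 2PFA with error $\varepsilon$ in polynomial expected time is recognized by a DFA with at most $c^{bc^2}$ states. Combining this with the elementary fact that every DFA for $L(p)$ has at least $p$ states gives $c^{bc^2}\ge p$, hence $b\,c^2\log c\ge\log p$, and since $c^2\log c< c^3$ one gets $c>\sqrt[3]{(\log p)/b}$. That is the entire argument; the cube root is nothing more than the $c^{bc^2}=2^{bc^2\log c}\le 2^{bc^3}$ bookkeeping. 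Your proposal instead tries to re-derive the \emph{internals} of the Dwork--Stockmeyer theorem for unary inputs (Markov-chain reduction, spectral decomposition, torsion orders of eigenvalues), and you explicitly leave the decisive quantitative step --- the $2^{bs^3}$ bound on the effective period --- unproved, deferring it to ``adapting the delicate algebraic core of Dwork--Stockmeyer.'' A proof that defers its hardest step to an unspecified adaptation of the very theorem it could have cited is not a proof.

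There is also a sign of confusion in your third paragraph: you say a crude argument gives a period bound of $2^{O(s^2)}$ but that this is ``too weak'' and must be ``sharpened'' to $2^{O(s^3)}$. The logic runs the other way. You need an \emph{upper} bound on the period to force $p\le 2^{bs^3}$; a bound of $2^{O(s^2)}$ is \emph{smaller}, hence stronger, and would yield the better lower bound $s\ge\sqrt{(\log p)/b}$. Going from exponent $s^2$ to $s^3$ is a weakening, not a sharpening, so the ``main obstacle'' you identify is not an obstacle in the direction you describe. The actual content you would need --- and which the cited lemma supplies for free --- is that the polynomial running time and the $s$-state bound together limit the number of distinguishable input lengths to $c^{bc^2}$; everything after that is the same arithmetic as in the paper. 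I recommend replacing the spectral re-derivation with a direct appeal to the simulation lemma.
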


In order to prove this theorem, we need

\begin{Lm}[\cite{DwS90}]\label{2PFAtoDFA}
For every $\varepsilon<1/2$, $a>0$ and $d>0$, there exists a constant $b>0$ such that, for any integer $c$, if a language $L$ is recognized with an error probability $\varepsilon$  by a $c$-state 2PFA within time $an^d$, where $n=|w|$ is the length of  input, then $L$ is recognized by some DFA with at most $c^{bc^2}$ states.
\end{Lm}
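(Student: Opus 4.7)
The plan is to derive the lower bound by contradiction, via the Dwork--Stockmeyer simulation (Lemma~\ref{2PFAtoDFA}) combined with the well-known classical lower bound that any 1DFA (and hence any DFA obtained from the simulation) recognizing $L(p)$ must have at least $p$ states. Recall that $L(p) = \{a^{kp} : k \in \mathbb{Z}^+\}$ is a unary language whose minimal DFA has exactly $p$ states, since the residue of the input length modulo $p$ must be tracked.

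First I would suppose, for contradiction, that $L(p)$ is recognized with error probability $\varepsilon < \tfrac{1}{2}$ by a $c$-state 2PFA whose expected running time is bounded by some polynomial $a n^d$. Applying Lemma~\ref{2PFAtoDFA} with this choice of $\varepsilon$, $a$, and $d$, there exists a constant $b > 0$ (depending only on $\varepsilon, a, d$, not on $p$) such that $L(p)$ is recognized by some DFA with at most $c^{b c^2}$ states.

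Next I would combine this upper bound on DFA size with the classical lower bound: any DFA recognizing $L(p)$ has at least $p$ states. This yields the inequality
\begin{equation}
c^{b c^2} \;\geq\; p,
\end{equation}
equivalently $b c^2 \log c \;\geq\; \log p$. Using the crude bound $\log c \leq c$, this gives $b c^3 \geq \log p$, and therefore
\begin{equation}
c \;\geq\; \sqrt[3]{(\log p)/b},
\end{equation}
which is the desired state lower bound (possibly after absorbing a constant factor into $b$).

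There is no real obstacle here: the argument is essentially a one-line chain once Lemma~\ref{2PFAtoDFA} is invoked. The only small care needed is (i) checking that the lower bound $p$ for DFAs recognizing $L(p)$ is standard (it follows from a Myhill--Nerode argument on the residue classes modulo $p$, and is already cited earlier in the paper), and (ii) making sure the constant $b$ in the final expression absorbs the slack from the estimate $\log c \leq c$. Both are routine.
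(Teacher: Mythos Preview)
Your proposal does not prove the stated lemma at all. The statement you were given is the Dwork--Stockmeyer simulation result (Lemma~\ref{2PFAtoDFA}), which asserts that any polynomial-time bounded-error 2PFA with $c$ states can be simulated by a DFA with at most $c^{bc^2}$ states. This lemma is \emph{cited} from \cite{DwS90} and carries no proof in the present paper; proving it would require the Markov-chain/word-matrix analysis of Dwork and Stockmeyer, none of which appears in your write-up.

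What you have actually written is a proof of Theorem~\ref{th-EQtoDFA} --- the theorem that \emph{applies} Lemma~\ref{2PFAtoDFA} to the language $L(p)$ to conclude that any polynomial-time 2PFA for $L(p)$ needs at least $\sqrt[3]{(\log p)/b}$ states. For that theorem your argument is essentially identical to the paper's: assume a $c$-state polynomial-time 2PFA, invoke Lemma~\ref{2PFAtoDFA} to get a DFA of size $c^{bc^2}$, combine with the $p$-state lower bound for DFAs recognizing $L(p)$, and solve $c^{bc^2}\ge p$ via $bc^2\log c \ge \log p \Rightarrow c^3 > (\log p)/b$. The only cosmetic difference is that you make the step $\log c \le c$ explicit, whereas the paper leaves it implicit. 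So as a proof of Theorem~\ref{th-EQtoDFA} your write-up is correct and matches the paper; as a proof of the lemma actually quoted, it is simply addressing the wrong statement.
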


\begin{proof}Assume that a $c$-state 2PFA ${\cal A}(p)$ recognizes
$L(p)$ with an error probability $\varepsilon<1/2$ and also
within a polynomial expected running time. According to Lemma
\ref{2PFAtoDFA}, there exits a 1DFA that recognizes $L(p)$ with
$c^{bc^2}$ states, where $b>0$ is a constant. As we know, any DFA recognizing $L(p)$ has at least $p$ states. Therefore,
\begin{equation}
c^{bc^2}\geq p \Rightarrow b c^2\log{c}\geq \log p
\Rightarrow c^3> (\log p)/ b\Rightarrow c>\sqrt[3]{(\log p)/b}.
\end{equation}
\ \
\end{proof}

\subsection{State succinctness for the language $C(m)$}

\begin{Th}\label{L-m}
For any $m\in {\mathbb{Z}}^+$ and $0<\varepsilon\leq\frac{1}{2}$, the language $C(m)$  can be recognized with one-sided error $\varepsilon$ by  a  2QCFA ${\cal A}(m,\varepsilon)$ with 2 quantum basis states and  a constant number of classical states (neither depending on $m$ nor on $\varepsilon$)  in a polynomial expected running time ${\bf O}(\frac{ 1}{\varepsilon}m^2n^4)$,  where $n$ is the length of  input.
\end{Th}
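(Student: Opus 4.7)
The plan is to mimic the iterative ``Las Vegas'' scheme used in the proof of the preceding theorem for $L(p)$: the 2QCFA ${\cal A}(m,\varepsilon)$ will again have two quantum basis states $|q_0\rangle,|q_1\rangle$ and a constant number of classical states, with both $m$ and $\varepsilon$ encoded only in the quantum unitaries (in particular in a gap matrix $U_{m,\varepsilon}$ analogous to $U_{p,\varepsilon}$). Each outer iteration will consist of (i) a \emph{length test} that leaves the qubit in $|q_0\rangle$ exactly when $|w|=m$ and otherwise drives it into a state whose $|q_1\rangle$-amplitude is at least ${\bf \Omega}(1/m)$ in absolute value, (ii) a standard-basis measurement that rejects on outcome $|q_1\rangle$, (iii) an application of $U_{m,\varepsilon}$ calibrated so that the subsequent measurement accepts with probability $4\varepsilon/m^2$ in the $|w|=m$ case, and (iv) a reset of the qubit to $|q_0\rangle$ if the iteration has not halted, followed by a new iteration.

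The bulk of the design is the length test, which is the main novelty compared with the preceding theorem. With only two quantum basis states and ${\bf O}(1)$ classical states, $m$ must be hidden inside the rotation angles, and a single left-to-right sweep rotating by $\pi/(2m)$ per input symbol is inadequate: although it sends $|q_0\rangle$ to $|q_1\rangle$ when $|w|=m$, it also does so (up to sign) for $|w|=3m,5m,\dots$, and these aliases are indistinguishable by a projective measurement in $\{|q_0\rangle,|q_1\rangle\}$. I would break this aliasing by exploiting the two-way nature of the automaton: the tape head performs several back-and-forth sweeps over the input, using carefully chosen forward and backward rotation angles so that the accumulated rotation vanishes for $|w|=m$ but has sine of magnitude at least ${\bf \Omega}(1/m)$ for $|w|\neq m$. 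Running ${\bf O}(n^2)$ such sweeps, each of length ${\bf O}(n)$, is the source of the $n^4$ factor in the running time.

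Given such a length test, the remainder of the argument proceeds in exact parallel to the proof of the preceding theorem. I would prove three analogues of Lemmas~\ref{in-L(p)},~\ref{not-in-L-p}, and~\ref{acc-rej}: (a) if $|w|=m$, then after the length test the state is deterministically $|q_0\rangle$; (b) if $|w|\neq m$, then in a single iteration the automaton rejects at step (ii) with probability at least ${\bf \Omega}(1/m^2)$; and (c) the gap unitary $U_{m,\varepsilon}$ makes the per-iteration acceptance probability at step (iii) equal to $4\varepsilon/m^2$ when $|w|=m$ and strictly smaller when $|w|\neq m$. Summing the geometric series over the unbounded number of iterations then gives $Pr[{\cal A}(m,\varepsilon)\ \text{accepts}\ w]=1$ for $w\in C(m)$ and $Pr[{\cal A}(m,\varepsilon)\ \text{rejects}\ w]>1-\varepsilon$ for $w\notin C(m)$. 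The expected number of iterations is ${\bf O}(m^2/\varepsilon)$ and each iteration costs ${\bf O}(n^4)$, yielding the announced bound ${\bf O}(\varepsilon^{-1}m^2n^4)$.

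The hard part will be the length test. In contrast to $L(p)$, where a single sweep with rotation $U_p$ automatically lands in $\pm|q_0\rangle$ on every correct input, for $C(m)$ the automaton must distinguish the single length $m$ from every other positive integer while having only two qubit dimensions and no classical counter that scales with $m$. Designing the multi-sweep pattern, choosing the rotation angles so that all periodic collisions other than $n=m$ are resolved, and carrying through the trigonometric lower bound of order $1/m^2$ on the per-iteration rejection probability constitute the technical heart of the proof.
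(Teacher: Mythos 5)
Your high-level architecture (single-sweep-style length test, reject on $|q_1\rangle$, gap unitary $U_{m,\varepsilon}$, reset, geometric-series analysis) matches the paper, but the two places where you locate the difficulty are exactly the two places where your plan diverges from what actually works, and both divergences are genuine gaps. First, the length test. The paper does \emph{not} use multiple back-and-forth sweeps: it uses a single left-to-right sweep in which the left end-marker rotates the qubit by $-m\sqrt{2}\pi$ and each input symbol rotates it by $+\sqrt{2}\pi$, so the net rotation is $(n-m)\sqrt{2}\pi$. The aliasing at $n=3m,5m,\dots$ that you worry about is an artifact of choosing a rational multiple of $\pi$ such as $\pi/(2m)$; with the irrational angle $\sqrt{2}\pi$ the net rotation is a multiple of $\pi$ iff $n=m$, and no multi-sweep machinery is needed. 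Your proposed fix of running ${\bf O}(n^2)$ sweeps with distinct forward and backward angles is not constructed, and it is unclear how a machine with ${\bf O}(1)$ classical states would even count ${\bf O}(n^2)$ sweeps; you have deferred precisely the step your approach cannot obviously supply.

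Second, and more seriously, your quantitative claims (b) and (c) are incompatible. With two quantum basis states the per-iteration rejection probability for $n\neq m$ cannot be made ${\bf \Omega}(1/m^2)$ uniformly in $n$: the bound the paper inherits from \cite{Zhg12} is $\sin^2((n-m)\sqrt 2\pi)\geq 1/(2(m-n)^2+1)$, which tends to $0$ as $n\to\infty$. If, as in your step (iii), the per-iteration acceptance probability is a fixed $4\varepsilon/m^2$ independent of $n$, then for non-members with $n\gg m$ you get $P_a\gg P_r$ and $P_r/(P_a+P_r)\to 0$, so the automaton would \emph{accept} long non-members with probability close to $1$. The paper's remedy is the component your proposal omits entirely: after the length test survives, two simulated random walks over the input must both reach the right end-marker (probability $1/(n+1)^2$) before $U_{m,\varepsilon}$ is applied, which scales the acceptance probability down to $\varepsilon/(2m^2(n+1)^2)$ and restores $P_a\ll P_r$ for every $n\neq m$. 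Those random walks, not a multi-sweep length test, are also the true source of the $n^4$ in the running time: each walk costs expected ${\bf O}(n^2)$ head moves per iteration and the halting probability per iteration is ${\bf \Omega}(\varepsilon/(m^2n^2))$.
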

\begin{figure}[htbp]
\begin{tabular}{|l|}
    \hline
\begin{minipage}[t]{0.93\textwidth}
Repeat the following  ad infinity:
 \begin{enumerate}
\item[1.] Move the tape head to the left end-marker, read the  end-marker $\ |\hspace{-1.5mm}c$,  apply
 $U_{|\hspace{-1mm}c}$ on $|q_0\rangle$, and move the tape head one cell to the right.
\item[2.] Until the scanned symbol is the right end-marker, apply $U_{\alpha}$ to the current
 quantum state and move the tape head one cell to the right.

\item[3.0] When the right end-marker is reached, measure the quantum state in the
 basis $\{|q_0\rangle, |q_1\rangle\}$.
 \begin{enumerate}
\item[3.1] If quantum outcome is $|q_1\rangle$, reject the input.
\item[3.2] Otherwise repeat the following subroutine two times:
 \begin{enumerate}
\item[3.2.1] Move the tape head to the first symbol right to the left end-marker.
\item[3.2.2] Until the currently read symbol is one of the end-markers simulate  a
 coin-flip and move the head right (left) if the outcome of the coin-flip is
 ``head" (``tail").
 \end{enumerate}
  \end{enumerate}

\item[4.] If the above process ends both times at the right end-marker,  apply $U_{m,\varepsilon}$
  to the current quantum state and measure the quantum state in the basis
  $\{|q_0\rangle, |q_1\rangle\}$. If the result is $|q_0\rangle$, accept the input;  otherwise apply a unitary operation to change the quantum state from $|q_1\rangle$ to $|q_0\rangle$
 and start a new iteration.
 \end{enumerate}
\end{minipage}\\
\hline
\end{tabular}
 \centering\caption{ Description of the behavior of ${\cal A}(m,\varepsilon)$ when recognizing the language $L_{m}$. }\label{f2}
\end{figure}

\begin{proof}
The main idea of the proof is as follows: we consider a 2QCFA ${\cal A}(m,\varepsilon)$ with 2  orthogonal quantum basis states
$|q_0\rangle$ and $|q_1\rangle$. ${\cal A}(m,\varepsilon)$ starts computation with the initial
quantum state $|q_0\rangle$. When ${\cal A}(m,\varepsilon)$ reads the left end-marker$\ |\hspace{-1.5mm}c$, the current quantum state will be rotated by the angle $-\sqrt{2}m\pi$ and every time when ${\cal A}(m,\varepsilon)$ reads a new symbol $\sigma\in\Sigma$, the state is rotated by the angle $\alpha=\sqrt 2\pi$ (notice that $\sqrt{2}m\pi=m\alpha$). When the right end-marker $\$$ is reached, ${\cal A}(m,\varepsilon)$ measures the current quantum state with projectors $\{|q_0\rangle\langle q_0|,|q_1\rangle\langle q_1|\}$. If the resulting quantum state is $|q_1\rangle$, the input string $w$ is rejected, otherwise, the automaton proceeds as shown in Figure \ref{f2}, where
\begin{equation}\label{matrix}
U_{|\hspace{-1.1mm}c}=\left(
  \begin{array}{cc}
    \cos m\sqrt{2}\pi  & \sin m\sqrt{2}\pi\\
    -\sin m\sqrt{2}\pi  & \cos  m\sqrt{2}\pi\\
  \end{array}
\right),\
U_{\alpha}=\left(
  \begin{array}{cc}
    \cos \sqrt{2}\pi & -\sin  \sqrt{2}\pi  \\
    \sin  \sqrt{2}\pi   & \cos  \sqrt{2}\pi  \\
  \end{array}
\right),
 \end{equation}
  \begin{equation}
 U_{m,\varepsilon}=\left(
  \begin{array}{cc}
    \frac{1}{\sqrt{2m^2/\varepsilon}}  & -\frac{\sqrt{2m^2/\varepsilon-1}}{\sqrt{2m^2/\varepsilon}}\\
    \frac{\sqrt{2m^2/\varepsilon-1}}{\sqrt{2m^2/\varepsilon}}  & \frac{1}{\sqrt{2m^2/\varepsilon}}  \\
  \end{array}\right).
\end{equation}

\begin{Lm}\label{in-C(m)}
If the input $w\in C(m)$, then the quantum state of the above automaton ${\cal A}(m,\varepsilon)$, after Step 2,
is  $|q_0\rangle$.
\end{Lm}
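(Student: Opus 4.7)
The plan is to verify Lemma~\ref{in-C(m)} by a direct matrix computation, exploiting the fact that both $U_{|\hspace{-1mm}c}$ and $U_\alpha$ are planar rotations acting on the same two-dimensional real subspace spanned by $|q_0\rangle$ and $|q_1\rangle$, so that their composition is governed entirely by the sum of rotation angles. The key observation is that $w \in C(m)$ means $|w| = m$, so Step 2 applies $U_\alpha$ exactly $m$ times on top of the single application of $U_{|\hspace{-1mm}c}$ in Step 1.

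First I would write $U_{|\hspace{-1mm}c}$ as the rotation by angle $-\sqrt{2}m\pi$ and $U_\alpha$ as the rotation by angle $\sqrt{2}\pi$; this is immediate from the matrix definitions in (\ref{matrix}). Then I would compute the state at the end of Step 2 as
\begin{equation}
U_\alpha^{\,m}\, U_{|\hspace{-1mm}c}\, |q_0\rangle,
\end{equation}
and use the fact that a product of rotation matrices in the plane is the rotation by the sum of the angles, so this equals the rotation by $m\sqrt{2}\pi + (-\sqrt{2}m\pi) = 0$ applied to $|q_0\rangle$, namely the identity applied to $|q_0\rangle$, which is $|q_0\rangle$.

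Concretely, I would display the matrix identity
\begin{equation}
\left(\begin{array}{cc} \cos\sqrt{2}\pi & -\sin\sqrt{2}\pi \\ \sin\sqrt{2}\pi & \cos\sqrt{2}\pi \end{array}\right)^{m} = \left(\begin{array}{cc} \cos m\sqrt{2}\pi & -\sin m\sqrt{2}\pi \\ \sin m\sqrt{2}\pi & \cos m\sqrt{2}\pi \end{array}\right)
\end{equation}
and multiply by $U_{|\hspace{-1mm}c}$ on the right to get the identity matrix, then apply it to $|q_0\rangle = (1,0)^T$.

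There is essentially no obstacle here; the entire content is that the designer chose the rotation angles so that the end-marker rotation exactly cancels the cumulative rotation produced by reading $m$ input symbols. The only thing to check carefully is that the sign conventions in $U_{|\hspace{-1mm}c}$ and $U_\alpha$ are opposite (the off-diagonal signs in (\ref{matrix}) are flipped), which is precisely what makes them inverse rotations and hence makes the composition trivialize when the input length equals $m$.
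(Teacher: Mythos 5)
Your proposal is correct and follows essentially the same route as the paper: the paper likewise computes $(U_\alpha)^m U_{|\hspace{-1mm}c}|q_0\rangle$ by collapsing $(U_\alpha)^m$ into the rotation by $m\sqrt{2}\pi$, observes that $U_{|\hspace{-1mm}c}$ is the inverse rotation, and concludes the product is the identity applied to $|q_0\rangle$. No gaps.
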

\begin{proof}
For $w\in C(m)$, we have $|w|=m$.
Starting with the state $|q_0\rangle$, ${\cal A}(m,\varepsilon)$ changes its quantum state after processing the whole input after Step 2 to
\begin{align}
|q\rangle&=(U_{\alpha})^mU_{|\hspace{-1.1mm}c}|q_0\rangle=
\left(
  \begin{array}{cc}
    \cos \sqrt{2}\pi & -\sin  \sqrt{2}\pi  \\
    \sin  \sqrt{2}\pi   & \cos  \sqrt{2}\pi  \\
  \end{array}
\right)^m
\left(
  \begin{array}{cc}
    \cos m\sqrt{2}\pi  & \sin m\sqrt{2}\pi \\
    -\sin m\sqrt{2}\pi  & \cos  m\sqrt{2}\pi \\
  \end{array}
\right)|q_0\rangle\\
&=\left(
  \begin{array}{cc}
    \cos m\sqrt{2}\pi & -\sin  m\sqrt{2}\pi  \\
    \sin  m\sqrt{2}\pi   & \cos  m\sqrt{2}\pi  \\
  \end{array}
\right)
\left(
  \begin{array}{cc}
    \cos m\sqrt{2}\pi  & \sin m\sqrt{2}\pi \\
    -\sin m\sqrt{2}\pi  & \cos  m\sqrt{2}\pi \\
  \end{array}
\right)|q_0\rangle\\
&=\left(
  \begin{array}{cc}
    1  & 0 \\
   0   & 1 \\
  \end{array}
\right)|q_0\rangle=|q_0\rangle.
\end{align}
\end{proof}
\begin{Lm}\cite{Zhg12}\label{not-in-C(m)}
If the input $w\not\in C(m)$, then ${\cal A}(p,\varepsilon)$ rejects $w$ after Step 3.1 with a probability at least $1/(2(m-n)^2+1)$.
\end{Lm}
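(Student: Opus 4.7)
The plan is to track the quantum state of ${\cal A}(m,\varepsilon)$ through Steps~1 and 2 on an arbitrary input of length $n\neq m$, and then to read off the rejection probability of Step~3.1. Writing $k=n-m\neq 0$, the left end-marker rotates $|q_{0}\rangle$ by the angle $-m\sqrt{2}\pi$ and each of the $n$ input symbols rotates it further by $\sqrt{2}\pi$, so the total rotation applied to $|q_{0}\rangle$ before the measurement is $(n-m)\sqrt{2}\pi=k\sqrt{2}\pi$. Hence the state at the end of Step~2 is
\[
|q\rangle\;=\;\cos(k\sqrt{2}\pi)\,|q_{0}\rangle+\sin(k\sqrt{2}\pi)\,|q_{1}\rangle,
\]
and the rejection probability at Step~3.1 is exactly $\sin^{2}(k\sqrt{2}\pi)$.

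What remains is to establish the uniform Diophantine bound
\[
\sin^{2}(k\sqrt{2}\pi)\ \ge\ \frac{1}{2k^{2}+1}\qquad\text{for every nonzero integer }k,
\]
which is the quantitative form of the irrationality of $\sqrt{2}$ used earlier by Ambainis and Watrous \cite{Amb02}. Let $p$ be the integer nearest to $k\sqrt{2}$. Since $2k^{2}-p^{2}$ is a nonzero integer, $|2k^{2}-p^{2}|\ge 1$, and the factorisation $(k\sqrt{2}-p)(k\sqrt{2}+p)=2k^{2}-p^{2}$ yields $|k\sqrt{2}-p|\ge 1/(k\sqrt{2}+p)$. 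Using $\sin^{2}(k\sqrt{2}\pi)=\sin^{2}(\pi(k\sqrt{2}-p))$ together with a standard lower bound for $\sin^{2}(\pi y)$ on $[0,1/2]$ then produces a lower bound of order $1/k^{2}$.

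The main obstacle is simply nailing the constant. The elementary chord estimate $\sin(\pi y)\ge 2y$ on $[0,1/2]$ combined with $k\sqrt{2}+p\le 2k\sqrt{2}+\tfrac{1}{2}$ only gives $\sin^{2}(k\sqrt{2}\pi)\ge 4/(2k\sqrt{2}+\tfrac{1}{2})^{2}$, which is of the right asymptotic order $\Theta(1/k^{2})$ but falls slightly short of $1/(2k^{2}+1)$ at moderate $k$. I would therefore dispatch a few small values of $|k|$ by direct numerical check of $\sin^{2}(k\sqrt{2}\pi)\ge 1/(2k^{2}+1)$, and for large $|k|$ replace the chord bound by the sharper asymptotic $\sin(\pi y)\sim \pi y$ valid as $y=\|k\sqrt{2}\|\to 0$; the factor $\pi^{2}>4$ gained this way comfortably absorbs the additive $+1$ in the denominator. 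Since the lemma is quoted verbatim from \cite{Zhg12}, an alternative is to invoke that reference and only verify that the transitions $U_{|\hspace{-1mm}c}$ and $U_{\alpha}$ of the present automaton realize precisely the rotation-by-$k\sqrt{2}\pi$ scenario that \cite{Zhg12} analyzes.
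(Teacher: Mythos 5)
The paper itself gives no proof of this lemma: it is imported verbatim from \cite{Zhg12}, with the underlying estimate going back to Ambainis and Watrous \cite{Amb02}, so there is no in-paper argument to compare against. Your reconstruction is nonetheless the right one and follows the cited sources' strategy: the net rotation after Steps 1--2 is $(n-m)\sqrt{2}\pi$, the rejection probability at Step 3.1 is exactly $\sin^{2}\bigl((n-m)\sqrt{2}\pi\bigr)$, and everything reduces to the Diophantine bound $\sin^{2}(k\sqrt{2}\pi)\geq 1/(2k^{2}+1)$ for nonzero integers $k$, which you correctly extract from $|2k^{2}-p^{2}|\geq 1$ for $p$ the nearest integer to $k\sqrt{2}$. (You also silently repair the statement's typo: the automaton is ${\cal A}(m,\varepsilon)$, not ${\cal A}(p,\varepsilon)$.)

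One step needs tightening. Your fallback for large $|k|$ --- replacing the chord bound by ``$\sin(\pi y)\sim\pi y$ valid as $y\to 0$'' --- conflates two limits: the distance $y$ from $k\sqrt{2}$ to the nearest integer does \emph{not} tend to $0$ as $|k|\to\infty$; it is equidistributed. The correct dichotomy is on $y$, not on $k$: if $y\geq 1/4$ then $\sin^{2}(\pi y)\geq 1/2\geq 1/(2k^{2}+1)$ outright, while if $y<1/4$ one may use $\sin(\pi y)\geq 0.9\,\pi y$ on that range together with $y\geq 1/(2\sqrt{2}|k|+1/2)$, which yields $\sin^{2}(\pi y)\geq 0.81\pi^{2}/(2\sqrt{2}|k|+1/2)^{2}>1/(2k^{2}+1)$ for every $|k|\geq 1$. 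With that repair the argument is complete and the ad hoc numerical check for small $|k|$ becomes unnecessary.
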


\begin{Lm}\cite{Zhg12}
 A coin flipping can be simulated by a 2QCFA  using two basis states
$|q_0\rangle$ and $|q_1\rangle$.
\end{Lm}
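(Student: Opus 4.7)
The plan is to give an explicit three-step quantum subroutine inside the 2QCFA that produces two equiprobable classical outcomes and preserves the quantum register for subsequent calls. The key idea is that a Hadamard unitary on a single qubit followed by a computational-basis measurement is already the canonical simulation of a fair coin; the remaining work is just to package it so that it fits the formal 2QCFA definition (where at each step $\Theta$ assigns a unitary or a projective measurement to each (classical state, scanned symbol) pair) without disturbing the surrounding algorithm.

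Concretely, I would maintain the invariant that whenever the coin-flip subroutine is entered, the quantum register is in the state $|q_0\rangle$. In the first step I apply the Hadamard $H=\tfrac{1}{\sqrt 2}\!\left(\begin{smallmatrix}1&1\\1&-1\end{smallmatrix}\right)$, producing the equal superposition $\tfrac{1}{\sqrt 2}(|q_0\rangle+|q_1\rangle)$. In the second step I perform the projective measurement with operators $\{P_0=|q_0\rangle\langle q_0|,\,P_1=|q_1\rangle\langle q_1|\}$; by the Born rule each outcome occurs with probability exactly $\tfrac12$, and the classical transition function $\delta$ is allowed (by the 2QCFA definition) to branch on the outcome, which realises the \emph{head}/\emph{tail} bifurcation. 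To preserve the invariant for the next call, on the $|q_1\rangle$ branch I additionally apply the Pauli-$X$ unitary $|q_0\rangle\langle q_1|+|q_1\rangle\langle q_0|$, returning the register to $|q_0\rangle$.

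The main (and essentially only) obstacle is bookkeeping rather than mathematics: the 2QCFA definition couples each application of $\Theta$ with a head displacement, so the three operations above must be spread across three auxiliary classical sub-states dedicated to the macro, each using $d=0$ in $\delta$ so the tape head stays in place. Since this costs only a constant number of classical states and no quantum states beyond the two already used, the subroutine behaves from the outside exactly as a call to a fair Bernoulli$(\tfrac12)$ oracle, which is what is needed by Step~3.2.2 of the automaton ${\cal A}(m,\varepsilon)$ in Figure~\ref{f2}.
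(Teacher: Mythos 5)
Your construction is correct and is essentially the standard one used in the cited reference \cite{Zhg12} (and originally in Ambainis--Watrous \cite{Amb02}): a Hadamard on $|q_0\rangle$, a projective measurement in the basis $\{|q_0\rangle,|q_1\rangle\}$ whose outcome drives the classical branching, and a reset of the register on the $|q_1\rangle$ branch, all packaged into a constant number of classical sub-states with stationary head moves. The paper itself only cites this lemma rather than proving it, but your argument matches the intended proof, including the invariant that the register is in $|q_0\rangle$ on entry (which holds in Step~3.2 since it is reached only after observing $|q_0\rangle$ in Step~3.0).
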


\begin{Lm}\label{acc-rej-C(m)}
Suppose that the length of  input $|w|=n$. If  $w\in C(m)$, then ${\cal A}(m,\varepsilon)$ accepts $w$ after Step 4 with the probability $\frac{\varepsilon}{2m^2(n+1)^2}$.  If $w\not\in C(m)$, then ${\cal A}(m,\varepsilon)$ accepts $w$, after Step 4, with a probability less than $\frac{\varepsilon}{2m^2(n+1)^2}$.
\end{Lm}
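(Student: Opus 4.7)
The plan is to decompose the event ``${\cal A}(m,\varepsilon)$ accepts $w$ after Step 4 in a single iteration'' into three essentially independent sub-events, compute the probability of each, and multiply. Namely, writing the input length as $n=|w|$, acceptance in one iteration requires: (i) the measurement in Step 3.0 yields $|q_0\rangle$ rather than $|q_1\rangle$; (ii) both executions of the coin-flip random walk in Step 3.2 terminate at the right end-marker; and (iii) the final measurement in Step 4, performed on $U_{m,\varepsilon}|q_0\rangle$, yields outcome $|q_0\rangle$. Event (iii) does not depend on $w$, and events (ii) depend only on $n$, so the only $w$-dependence sits in event (i).

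First I would handle the random walk in Step 3.2. Starting at cell position $1$ (the first tape cell to the right of $\ |\hspace{-1.5mm}c$) and using a fair coin, the head performs a symmetric simple random walk on $\{0,1,\ldots,n+1\}$ that halts upon reaching either end-marker. By the standard gambler's ruin formula, the probability of being absorbed at the right end-marker position $n+1$ before returning to position $0$ equals $\frac{1}{n+1}$. Since the two executions of the subroutine are independent, both of them end at the right end-marker with probability exactly $\frac{1}{(n+1)^2}$. (Here I implicitly use the auxiliary lemma cited from \cite{Zhg12} that a fair coin flip can indeed be simulated by the 2QCFA using its two basis states, so that these walks really are symmetric.)

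Next I would analyze Step 4. When the quantum state just before Step 4 is $|q_0\rangle$, the formula for $U_{m,\varepsilon}$ gives
\begin{equation*}
U_{m,\varepsilon}|q_0\rangle
 \;=\; \tfrac{1}{\sqrt{2m^2/\varepsilon}}\,|q_0\rangle
 \;+\; \tfrac{\sqrt{2m^2/\varepsilon-1}}{\sqrt{2m^2/\varepsilon}}\,|q_1\rangle,
\end{equation*}
so measuring in the basis $\{|q_0\rangle,|q_1\rangle\}$ yields $|q_0\rangle$, and hence acceptance, with conditional probability $\frac{\varepsilon}{2m^2}$.

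Finally I would combine the three pieces. If $w\in C(m)$, Lemma \ref{in-C(m)} tells us that after Step 2 the quantum state is exactly $|q_0\rangle$, so event (i) has probability $1$, and multiplying gives an acceptance probability of $1\cdot\frac{1}{(n+1)^2}\cdot\frac{\varepsilon}{2m^2}=\frac{\varepsilon}{2m^2(n+1)^2}$. If $w\notin C(m)$, i.e.\ $n\neq m$, then the state produced by Step 2 is $\cos(m\sqrt{2}\pi)|q_0\rangle+\text{something}$; more precisely, by the same matrix computation as in Lemma \ref{in-C(m)} but with the two rotation angles no longer cancelling, the $|q_0\rangle$-amplitude has squared magnitude strictly smaller than $1$ (since $\sqrt{2}$ is irrational the relevant trigonometric expression cannot equal $\pm 1$), so event (i) has probability $<1$ while (ii) and (iii) retain the same values, yielding an overall acceptance probability strictly smaller than $\frac{\varepsilon}{2m^2(n+1)^2}$. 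The only step I expect to require any care is justifying that ``strictly smaller'' in this last argument, and for that the cleanest route is to invoke (and reuse the state computation behind) Lemma \ref{not-in-C(m)}, which already extracts a quantitative gap away from $1$.
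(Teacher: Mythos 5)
Your proposal is correct and follows essentially the same route as the paper's own proof: factor the acceptance event into the Step~3.0 measurement outcome, the two random walks each reaching the right end-marker (probability $\frac{1}{(n+1)^2}$ by the gambler's-ruin formula), and the Step~4 measurement on $U_{m,\varepsilon}|q_0\rangle$ (probability $\frac{\varepsilon}{2m^2}$). The only difference is that you justify the strict inequality in the $w\notin C(m)$ case more explicitly (via the irrationality of $\sqrt{2}$ and Lemma~\ref{not-in-C(m)}) than the paper, which simply asserts that the probability of observing $|q_0\rangle$ after Step~2 is less than $1$.
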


\begin{proof}
If $w\in C(m)$, then the  quantum state of ${\cal A}(m,\varepsilon)$ after Step 2 will be $|q_0\rangle$, according to Lemma \ref{in-C(m)}. In Step 3.2  two  random walks  start
at cell 1 and stop at cell 0 (with the left end-marker\ \
$|\hspace{-1.5mm}c$) or at the cell $n+1$ (with the right
end-marker $\$$). It is known, from the Markov chains theory, that the
probability of reaching  cell $n+1$ is $\frac{1}{n+1}$ (see
Chapter 14.2 in \cite{Fel67}). Therefore, the probability that the quantum state of ${\cal A}(m,\varepsilon)$ is $|q_0\rangle$  after Step 3.2  will be $\frac{1}{(n+1)^2}$.

The quantum state of ${\cal A}(m,\varepsilon)$ after Step 4 will therefore be
\begin{align}
|q\rangle&=U_{m,\varepsilon}|q_0\rangle=\left(
  \begin{array}{cc}
    \frac{1}{\sqrt{2m^2/\varepsilon}}  & -\frac{\sqrt{2m^2/\varepsilon-1}}{\sqrt{2m^2/\varepsilon}}\\
    \frac{\sqrt{2m^2/\varepsilon-1}}{\sqrt{2m^2/\varepsilon}}  & \frac{1}{\sqrt{2m^2/\varepsilon}}  \\
  \end{array}\right)|q_0\rangle\\
 &=\frac{1}{\sqrt{2m^2/\varepsilon}}|q_0\rangle+ \frac{\sqrt{2m^2/\varepsilon-1}}{\sqrt{2m^2/\varepsilon}}|q_1\rangle.
\end{align}
 The input is therefore accepted after the measurement in Step 4 with the probability $\frac{\varepsilon}{2m^2(n+1)^2}$.

 If $w\not\in C(m)$, then the probability that after Step 2 the quantum state of ${\cal A}(m,\varepsilon)$  is $|q_0\rangle$ is less than 1. Therefore, ${\cal A}(m,\varepsilon)$ accepts $w$, after Step 4, with a probability less than $\frac{\varepsilon}{2m^2(n+1)^2}$.
\end{proof}

In Step 4, if ${\cal A}(m,\varepsilon)$ accepts
the input string, then ${\cal A}(m,\varepsilon)$ halts.  Otherwise, the
resulting quantum state after the measurement is $|q_1\rangle$
and an application of the operator $|q_0\rangle\langle q_1|$ results in state $|q_0\rangle$ in which then
${\cal A}(m,\varepsilon)$ starts a new iteration.

Suppose that the length of  input $w$ is $n$. If $w\in C(m)$, then according to Lemma \ref{acc-rej-C(m)}, the probability of accepting the input in one iteration is
\begin{equation}
 P_a=\frac{\varepsilon}{2m^2(n+1)^2}=\frac{\varepsilon}{2m^2(m+1)^2}
\end{equation}
and the probability of rejecting the input in one iteration is
\begin{equation}
 P_r=0.
\end{equation}
If  the whole process is repeated for infinity, the accepting probability is
\begin{equation}
Pr[{\cal A}(m,\varepsilon)\  \text{accepts}\  w] =\sum_{i\geq
0}(1-P_a)^i(1-P_r)^iP_a
=\sum_{i\geq
0}(1-P_a)^iP_a
=\frac{P_a}{P_a}=1.
\end{equation}

If $w\not\in C(m)$, then according to Lemmas \ref{not-in-C(m)} and \ref{acc-rej-C(m)}, the probability of accepting the input in one iteration is
\begin{equation}
 P_a<\frac{\varepsilon}{2m^2(n+1)^2}
\end{equation}
and the probability of rejecting the input in one iteration is
\begin{equation}
 P_r>\frac{1}{2(m-n)^2+1}.
\end{equation}
If the whole process is repeated indefinitely, then the probability that ${\cal A}(p,\varepsilon)$ rejects the input $w$  is
\begin{equation}
Pr[{\cal A}(p,\varepsilon)\  \text{rejects}\  w] =\sum_{i\geq
0}(1-P_a)^i(1-P_r)^iP_r
=\frac{P_r}{P_a+P_r-P_aP_r}>\frac{P_r}{P_a+P_r}
\end{equation}
\begin{equation}
>\frac{\frac{1}{2(m-n)^2+1}}{\frac{\varepsilon}{2m^2(n+1)^2}+\frac{1}{2(m-n)^2+1}}=\frac{1}{\frac{{(2(m-n)^2+1)}\varepsilon}{2m^2(n+1)^2}+1}
>\frac{1}{\varepsilon+1}>1-\varepsilon.
\end{equation}

In the above inequality we have used the fact that $0<\frac{{2(m-n)^2+1}}{2m^2(n+1)^2}<1$ for any $m>0$ and $n> 0$.

{\bf Time analysis:} The expected running time of Steps 1 to 4 is ${\bf O}(n^2)$ time. The halting probability is in both cases ${\bf \Omega}\left(\frac{ \varepsilon}{m^2n^2}\right)$, and therefore the expected number of repetitions of the above process is, in both cases, ${\bf O}(\frac{m^2n^2}{ \varepsilon})$.
Hence the expected running time of ${\cal A}(m,\varepsilon)$ is
${\bf O}(\frac{ 1}{\varepsilon}m^2n^4)$.
\end{proof}

The matrices (quantum transition function) used in the  previous Theorem are  modifications of the ones used in \cite{Amb02}.
Similar  proof methods can be found in \cite{Amb02,Zhg12,ZhgQiu11}.

\begin{Rm}
Using the above theorem and the intersection property of languages recognized by 2QCFA \cite{Qiu08}, it is easy  to improve the result from \cite{Zhg12} related to the promise problem\footnote{See  page 102 in \cite{Zhg12}.} $A^{eq}(m)$ to a language $L^{eq}(m)=\{a^mb^m\}=L^{eq}\cap C(2m)$, where the language $L^{eq}=\{a^nb^n\,|\,n\in \mathbb{N} \}$.
 Therefore, the open problem  from \cite{Zhg12} is solved.
\end{Rm}

It is obvious that the number of states of a  1DFA to accept the language $C(m)$ is at least  $m$. Using a similar proof as of  Theorem \ref{th-EQtoDFA}, we  get:
\begin{Th}
For any integer $m$, any  polynomial expected running time 2PFA recognizing $C(m)$ with error probability $\varepsilon<\frac{1}{2}$ has at least $\sqrt[3]{(\log m)/b}$ states, where $b$ is a constant.
\end{Th}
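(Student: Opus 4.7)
The plan is to mirror, essentially verbatim, the proof of Theorem~\ref{th-EQtoDFA}, with $C(m)$ playing the role of $L(p)$ and the trivial lower bound $m$ on 1DFA size playing the role of the lower bound $p$. Toward the desired state lower bound, I would suppose that some $c$-state 2PFA ${\cal A}(m)$ recognizes $C(m)$ with error probability $\varepsilon<1/2$ within polynomial expected running time $an^d$ for some constants $a,d>0$. Lemma~\ref{2PFAtoDFA} (the Dwork--Stockmeyer simulation) then produces an equivalent 1DFA of size at most $c^{bc^{2}}$, where $b>0$ depends only on $\varepsilon$, $a$, and $d$, and not on $c$ or on $m$.

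Next, I would use the elementary observation recorded just before the statement of the theorem: any 1DFA recognizing $C(m)$ needs at least $m$ states. (The $m+1$ strings $\lambda,a,a^{2},\ldots,a^{m}$ are pairwise Myhill--Nerode inequivalent with respect to $C(m)$, so in fact at least $m+1$ states are required, which is more than enough for the argument.) Hence the 1DFA produced in the previous step must satisfy
\begin{equation}
c^{bc^{2}}\ \geq\ m.
\end{equation}

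Finally, taking logarithms and using $\log c\leq c$ gives $bc^{3}\ \geq\ bc^{2}\log c\ \geq\ \log m$, and therefore $c\ \geq\ \sqrt[3]{(\log m)/b}$, which is the claimed lower bound. The whole argument is a direct adaptation of the proof of Theorem~\ref{th-EQtoDFA}; there is no real obstacle, since the only ingredient specific to $C(m)$ is the trivial bound ``any 1DFA for $C(m)$ has $\geq m$ states'', which replaces the corresponding bound ``any 1DFA for $L(p)$ has $\geq p$ states'' used in the earlier theorem.
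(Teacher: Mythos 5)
Your proposal is correct and is exactly the argument the paper intends: the paper gives no separate proof for $C(m)$, saying only that the result follows ``using a similar proof as of Theorem~\ref{th-EQtoDFA}'', i.e.\ the Dwork--Stockmeyer simulation (Lemma~\ref{2PFAtoDFA}) combined with the lower bound of $m$ states for any 1DFA accepting $C(m)$, yielding $c^{bc^{2}}\geq m$ and hence $c\geq\sqrt[3]{(\log m)/b}$. Your version is, if anything, slightly more careful in making the step $bc^{2}\log c\leq bc^{3}$ explicit.
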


The sizes of 1PFA and 1QFA recognizing languages $L(p)$ or $C(m)$ with an error $\varepsilon$ depend on the error $\varepsilon$ in most of the papers. For example, in \cite{Amb09}, the size of MO-1QFA accepting $L(p)$ with one-sided error $\varepsilon$ is $4\frac{\log 2p}{\varepsilon}$. If $\varepsilon<\frac{4}{p}$, the state complexity advantage of MO-1QFA  disappears. However, in our model, the sizes of 2QCFA do not depend on the error $\varepsilon$, which means that 2QCFA have state advantage for any $\varepsilon>0$.

\section{A trade-off property of 1QCFA}

Quantum resources are  expensive and hard to deal with. One  can expect to have   only   very limited number of  qubits in  current quantum system. In some cases, one cannot expect to have  enough qubits to solve a given problem (or to recognize a given language). It is therefore interesting to find out whether there are some trade-off  between needed quantum and classical resources. We prove in the following that it is so in some cases. Namely,
 we prove that there  exist  trade-offs  in case  1QCFA are used to accept the language $L(p)$.

\begin{Th}
  For any  integer $p\in {\mathbb{Z}}^+$ with prime factorization $p=p_1^{\alpha_1}p_2^{\alpha_2}\cdots p_s^{\alpha_s}$ ($s>1$),  for any partition $I_1, I_2$ of $\{1,\ldots, s\}$, and for $q_1=\prod_{i\in I_1}p_i^{\alpha_i}$ and $q_2=\prod_{i\in I_2}p_i^{\alpha_i}$, the language $L(p)$  can be recognized  with one-sided error $\varepsilon$ by a  1QCFA $A(q_1,q_2,\varepsilon)$ with ${\bf O}(\log{q_1})={\bf O}(\sum_{i\in I_1}{\alpha_i}\log p_i)$ quantum basis states and ${\bf O}(q_2)={\bf O}(\prod_{i\in I_2}p_i^{\alpha_i})$ classical states.
\end{Th}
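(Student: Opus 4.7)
The central observation is number-theoretic: since $I_1$ and $I_2$ partition the index set of prime factors of $p$, the numbers $q_1$ and $q_2$ are coprime and $q_1 q_2 = p$. By the Chinese Remainder Theorem, for any positive integer $n$,
\begin{equation}
n \equiv 0 \pmod{p} \ \Longleftrightarrow\ n\equiv 0\pmod{q_1}\ \text{and}\ n\equiv 0\pmod{q_2}.
\end{equation}
Thus $L(p)=L(q_1)\cap L(q_2)$, and the plan is to let the classical part of the 1QCFA handle the $\bmod\ q_2$ test while the quantum part, in parallel, performs the $\bmod\ q_1$ test with one-sided error $\varepsilon$.

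First I would set up the classical control. Using $q_2+c$ classical states, where $c$ is a small constant for handling the end-markers and the final decision, the automaton maintains a counter for the length of the scanned input modulo $q_2$: on reading each $a$, the counter is advanced cyclically. After the right end-marker is read, the automaton remembers, in its classical state, whether the counter is $0$ (i.e., whether $q_2\mid n$). This contributes ${\bf O}(q_2)={\bf O}(\prod_{i\in I_2}p_i^{\alpha_i})$ classical states.

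In parallel, the quantum register simulates the MO-1QFA construction of Ambainis and Freivalds \cite{Amb98} (as improved in \cite{BMP06,Amb09}) that recognizes $L(q_1)$ with one-sided error $\varepsilon$ using ${\bf O}(\log q_1)$ basis states (with constants depending on $\varepsilon$). Concretely, the Hilbert space is a direct sum of constantly many two-dimensional blocks, each indexed by an integer $k_j$ coprime to $q_1$, on which the transition $U_{a}$ acts as rotation by angle $2\pi k_j/q_1$; the angles $k_j$ are chosen (by an averaging / pigeonhole argument as in \cite{Amb98}) so that whenever $q_1\nmid n$, at least one block is rotated noticeably away from its initial state. The unitary at the left end-marker prepares the superposition over blocks, and the unitary at the right end-marker followed by a projective measurement either returns the ``accept'' outcome with probability $1$ (if $q_1\mid n$) or with probability at most $\varepsilon$ (if $q_1\nmid n$).

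Finally I would glue the two tests together: the 1QCFA accepts the input if and only if the classical counter ends in the ``$q_2\mid n$'' state \emph{and} the terminal quantum measurement returns the accept outcome. Correctness is now a direct consequence of CRT and one-sided error: if $n$ is a multiple of $p$, both tests accept with probability $1$; if $n$ is not a multiple of $p$, then either $q_2\nmid n$ (and the classical test deterministically rejects) or $q_1\nmid n$ (and the quantum test rejects with probability at least $1-\varepsilon$). The state counts are ${\bf O}(\log q_1)$ quantum basis states and ${\bf O}(q_2)$ classical states, as claimed. The only subtle point—and the one I would be most careful about—is importing the Ambainis--Freivalds construction into the quantum subroutine of a 1QCFA so that the classical counter and the end-marker-triggered unitaries/measurement are synchronized; this is routine but needs to be spelled out so that the quantum transition on each $a$ truly does not depend on the current value of the classical counter, ensuring that the quantum register performs exactly $n$ copies of the rotation regardless of the classical state.
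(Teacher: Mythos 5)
Your proposal is correct and follows essentially the same route as the paper: both rest on the coprimality of $q_1$ and $q_2$, the decomposition $L(p)=L(q_1)\cap L(q_2)$, the ${\bf O}(\log q_1)$-state MO-1QFA of Ambainis--Freivalds (as improved in \cite{Amb09}) for $L(q_1)$, and the $q_2$-state 1DFA for $L(q_2)$. The only cosmetic difference is that the paper packages the final combination step as a citation to a general intersection lemma for 1QCFA from \cite{ZhgQiu112} (which multiplies quantum and classical state counts), whereas you inline the product construction by hand; the resulting state counts and one-sided error analysis coincide.
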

\begin{proof}
Obviously, $\gcd{(q_1,q_2)}=1$ and therefore $L(p)=L(q_1)\cap L(q_2)$.  According to \cite{Amb09}, the language $L(q_1)$ can be recognized with one-sided error $\varepsilon$  by an MO-1QFA  with $4\frac{\log 2q_1}{\varepsilon}$ quantum basis states. It is well known that the language $L(q_2)$ can be recognized by a $q_2$-state 1DFA.
 A $q_1$-state MO-1QFA can be simulated by a 1QCFA with $q_1$ quantum basis states and 3 classical states \cite{ZhgQiu112}. A $q_2$-state 1DFA can be simulated by 1QCFA  with 1 quantum state and $q_2+1$ classical states \cite{ZhgQiu112}. Now we have to use the following lemma

 \begin{Lm}\cite{ZhgQiu112}\label{1QCFA}
Let a language $L_1$  be recognized with one-sided error $\varepsilon_1$ by a 1QCFA ${\cal A}_1$  with $q_1$ quantum basis states and $c_1$ classical states.  Let a language $L_2$  be recognized with one-sided error $\varepsilon_2$ by a 1QCFA ${\cal A}_2$ with $q_2$ quantum basis states and $c_2$ classical states. Then the language $L_1\cap L_2$  can be recognized  with one-sided error $\varepsilon_1+\varepsilon_2-\varepsilon_1\varepsilon_2$ by a 1QCFA ${\cal A}$  with $q_1q_2$ quantum basis states and $c_1c_2$ classical states.
 \end{Lm}

According to Lemma \ref{1QCFA}, the language $L(p)=L(q_1)\cap L(q_2)$ can be recognized  with one-sided error $\varepsilon$ by a 1QCFA $A(q_1,q_2,\varepsilon)$ with ${\bf O}(\log{q_1})={\bf O}(\sum_{i\in I_1}{\alpha_i}\log p_i)$ quantum basis states and ${\bf O}(q_2)={\bf O}(\prod_{i\in I_2}p_i^{\alpha_i})$ classical states.
\end{proof}

\section{Concluding remarks}

We have explored the state complexity of 1QCFA  a  family of  promise problems. For any $n\in {\mathbb{Z}}^+$, we have proved that the promise problem $A_{EQ}(n)$,  modeling the strings equality problem of length $n$ with restriction,  can be solved by an exact  1QCFA ${\cal A}(n)$ with $2n$ quantum basis states and ${\bf {O}}(n)$ classical states,  whereas the sizes of the corresponding 1DFA are $2^{{\bf \Omega}(n)}$. Afterwards, we have shown  state succinctness results of 2QCFA for two basic and extensively studied families of regular languages. Namely, $L(p)$ and $C(m)$. We have proved that 2QCFA can be remarkably more concise than their corresponding classical counterparts and we have also solved one  open problem from \cite{Zhg12}.
At last, we have proved that there  exist various trade-offs  for 1QCFA, which respect to the numbers of quantum and classical states needed for  the family of languages $L(p)$.

Some possible problems for future work are:
\begin{enumerate}
  \item It has been proved \cite{Kla00} that exact 1QCFA have not state complexity advantage over 1DFA in recognizing a language. How about exact 2QCFA vs. 2DFA?
  \item It  would be interesting to find out more trade-off properties of 1QCFA or 2QCFA.
\end{enumerate}

\section*{Acknowledgements}
We would like to thank Matej Pivoluska for stimulating discussions.
 The first and second authors  acknowledge support of
the Employment of Newly Graduated Doctors of Science for Scientific Excellence project/grant (CZ.1.07./2.3.00\linebreak[0]/30.0009)  of Czech Republic.
The third author  acknowledges support of the National
Natural Science Foundation of China (Nos. 61272058, 61073054).


\end{document}